\providecommand{\U}[1]{\protect\rule{.1in}{.1in}}
\newenvironment{proof}[1][Proof]{\noindent\textbf{#1.} }{\ \rule{0.5em}{0.5em}}
\begin{document}

\title{Quantum Copy-Protection and Quantum Money}
\author{Scott Aaronson\thanks{MIT. \ Email: aaronson@csail.mit.edu. \ Partly supported
by the Keck Foundation. \ Some of this work was done while the author was at
the University of Waterloo.}}
\date{}
\maketitle

\begin{abstract}
Forty years ago, Wiesner proposed using quantum states to create money that is
physically impossible to counterfeit, something that cannot be done in the
classical world. \ However, Wiesner's scheme required a central bank to verify
the money, and the question of whether there can be unclonable quantum money
that anyone can verify has remained open since. \ One can also ask a related
question, which seems to be new: can quantum states be used as
\textit{copy-protected programs}, which let the user evaluate some function
$f$, but not create more programs for $f$?

This paper tackles both questions using the arsenal of modern computational
complexity. \ Our main result is that there exist \textit{quantum
oracles}\ relative to which publicly-verifiable quantum money is possible, and
any family of functions that cannot be efficiently learned from its
input-output behavior can be quantumly copy-protected. \ This provides the
first formal evidence that these tasks are achievable. \ The technical core of
our result is a \textquotedblleft Complexity-Theoretic No-Cloning
Theorem,\textquotedblright\ which generalizes both the standard No-Cloning
Theorem and the optimality of Grover search, and might be of independent
interest. \ Our security argument also requires explicit constructions of
quantum $t$-designs.

Moving beyond the oracle world, we also present an \textit{explicit candidate
scheme} for publicly-verifiable quantum money, based on random stabilizer
states; as well as two explicit schemes for copy-protecting the family of
point functions. \ We do not know how to base the security of these schemes on
any existing cryptographic assumption. \ (Note that without an oracle, we can
only hope for security under \textit{some} computational assumption.)

\end{abstract}

\section{Introduction\label{INTRO}}

In classical physics, any information that can be read can be copied an
unlimited number of times---which is why the makers of software, music CDs,
and so on have met such severe difficulties enforcing \textquotedblleft
digital rights management\textquotedblright\ on their products (see Halderman
\cite{halderman}\ for example). \ Quantum states, on the other hand, cannot in
general be copied, since measurement is an irreversible process that destroys
coherence. \ And this immediately raises the possibility of using quantum
states as \textit{unclonable information}, such as money or copy-protected
software. \ The idea of using quantum states in this way actually predates
quantum information as a field. \ In a remarkable 1970 manuscript that first
discussed the idea of quantum cryptography, Wiesner \cite{wiesner} also
proposed a scheme for \textquotedblleft quantum money\textquotedblright\ that
a central bank can prepare and verify, but that is information-theoretically
impossible for anyone other than the bank to copy. \ Wiesner's result
immediately raised a question: could there be quantum money states that
\textit{anyone} can verify---that is, for which the authentication procedure
is completely public---but that are still infeasible to copy? \ This latter
question has remained open for forty years.

However, while the quantum money problem is fascinating by itself, it also
motivates a broader problem: \textit{what sorts of \textquotedblleft
unclonable power\textquotedblright\ can be provided by a quantum state?} \ So
for example, given a Boolean function $f:\left\{  0,1\right\}  ^{n}%
\rightarrow\left\{  0,1\right\}  $, one can ask: does there exist a quantum
state $\left\vert \psi_{f}\right\rangle $ that lets its owner compute $f$ in
polynomial time, but does \textit{not} let her efficiently prepare more states
that are also useful for computing $f$?\footnote{Formally, of course, we would
want a scheme that worked for a whole family of $f$'s.} \ Such a state could
be interpreted as \textquotedblleft quantumly copy-protected
software.\textquotedblright\ \ Whereas in the quantum money problem, we wanted
unclonable states that could be \textit{verified as authentic}, in the quantum
copy-protection problem we want unclonable states that let us \textit{do
something useful} (namely, compute $f$). \ There are other interesting
unclonable functionalities (such as identity cards), but in this paper, money
and copy-protected software will be more than enough to occupy us.

A first, crucial observation is that, if we insist on
\textit{information-theoretic security} (as provided, for example, by quantum
key distribution), then we cannot hope for either quantum copy-protection
\textit{or} publicly-verifiable quantum money. \ The reason is simple: an
adversary with unlimited computational power could loop through all possible
quantum states $\left\vert \psi\right\rangle $, halting only when it found a
state\ with the required properties.\footnote{In the copy-protection case, the
property of $\left\vert \psi\right\rangle $\ that we care about is that of
\textquotedblleft being a valid quantum program for the function
$f$.\textquotedblright\ \ And even if an explicit description of $f$ is not
available, this property can be checked using unlimited computational power,
together with polynomially many copies of $\left\vert \psi_{f}\right\rangle $
(the \textquotedblleft store-bought\textquotedblright\ quantum program for
$f$).} \ Therefore, if we want these functionalities, we are going to have to
make computational hardness assumptions. \ However, this does not by any means
defeat the purpose---for remember that in the classical world, the
functionalities we are talking about are flat-out impossible, regardless of
what computational assumptions we make.

In our view, unclonable information remains one of the most striking potential
applications of quantum mechanics to computer science. \ Firstly, unclonable
information would solve problems of clear, longstanding, and undisputed
importance in the classical world---in the case of money that cannot be
counterfeited, a problem that people have been trying to solve for thousands
of years. \ Secondly, unlike with (say) quantum cryptography, the problems
being addressed here are ones for which theoretically-grounded classical
alternatives simply \textit{do not exist}, because of the copyability of
classical information.\footnote{Here we are leaving aside solutions that
involve repeated interaction with a server: we seek solutions in which the
cash, software, etc. can be placed under the complete physical control of the
user.} \ Thirdly, as we will see, some quantum money proposals require no
multi-qubit entanglement, and might therefore be technologically feasible long
before general-purpose quantum computing.\footnote{On the other hand, quantum
money must be protected from decohering, and this remains the central
technological obstacle to realizing it. \ Depending on the physical substrate,
right now qubits can be stored coherently for a few seconds or at most
minutes, and only in laboratory conditions. \ By contrast, quantum key
distribution (QKD) requires only the \textit{transmission} of coherent qubits
and not their long-term \textit{storage}---which is why QKD can be implemented
even with today's technology.}

Given all this, it is surprising that the questions of unclonable quantum
money and software have barely been taken up over the last two decades. \ The
goal of this paper is to revisit these questions using the arsenal of modern
theoretical computer science. \ Our main result (stated informally) is the following:

\begin{theorem}
\label{main}There exists a quantum oracle $U$\ relative to which
publicly-verifiable quantum money and quantum copy-protection of
arbitrary\textit{\footnote{By which we mean, software that is not learnable
from its input/output behavior using a polynomial-time quantum computation.
\ Learnable software is impossible to copy-protect for trivial reasons. \ Our
result shows that relative to an oracle, this is the only obstruction.}}
software are possible.\footnote{For technical reasons, the copy-protection
result currently only gives security against pirating algorithms that
\textit{more than double} the number of programs. \ We hope to remove this
restriction in the future.}
\end{theorem}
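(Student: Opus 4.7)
The plan is to reduce both halves of the theorem to a single technical workhorse---the ``Complexity-Theoretic No-Cloning Theorem'' promised in the abstract---and then to package that lemma using quantum $t$-designs so the bank and software publisher need only be efficient rather than information-theoretic. First I would design the oracle $U$. For each label $s$ (serial number for money, key index for copy-protection), fix a state $|\psi_s\rangle$ drawn from a large near-orthogonal family (eventually instantiated by a $t$-design). The oracle has two parts: a \emph{verification} part $U_{\mathrm{ver}}$ that on input $(s,|\varphi\rangle)$ reflects about $|\psi_s\rangle$ (equivalently, sets a flag when $|\varphi\rangle=|\psi_s\rangle$); and, for copy-protection, a \emph{computation} part $U_{\mathrm{comp}}$ that on input $(|\varphi\rangle,x)$ runs the verification across all labels $k$ and, if $|\varphi\rangle\approx|\psi_k\rangle$, returns $f_k(x)$. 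The bank issues $(s,|\psi_s\rangle)$; the software publisher issues $|\psi_k\rangle$.

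The next step is to state and prove the Complexity-Theoretic No-Cloning Theorem: if $|\psi\rangle$ is drawn uniformly from a family $S$ of size $2^{\Omega(n)}$, then any algorithm given one copy of $|\psi\rangle$ and $q$ queries to the reflection oracle for $S$ produces two copies with probability at most $O(q^{2}/|S|)$. The proof is a hybrid argument in the spirit of BBBV: introduce a progress measure that tracks the divergence between the algorithm's state under the real oracle and under a ``trivial'' oracle, then show that each query increases this measure by only $O(1/\sqrt{|S|})$ on average over $|\psi\rangle$. Both the ordinary no-cloning bound (the $q=0$ regime) and the Grover optimality bound (the small-$|S|$, no-free-copy regime) should fall out as special cases, explaining the ``generalizes both'' remark in the abstract.

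From this workhorse the two applications drop out. For money, forging a second valid banknote \emph{is} cloning, so $\Omega(2^{n/2})$ queries are required. For copy-protection, I would argue that if $\{f_k\}$ cannot be learned from input/output behavior, then $U_{\mathrm{comp}}$ can only be used productively on states close to $|\psi_k\rangle$; hence producing two independently useful programs from a single copy of $|\psi_k\rangle$ reduces to cloning, via a standard rewinding/simulation argument that replaces each ``useful program'' with an approximation of $|\psi_k\rangle$. Finally, I would swap the Haar-random $|\psi_s\rangle$ for states drawn from an explicit quantum $t$-design of sufficient strength; the $t$-design reproduces the expectations used in the hybrid argument up to negligible error, making the bank and publisher polynomial-time.

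The main obstacle will be the Complexity-Theoretic No-Cloning Theorem itself, for two reasons. First, the algorithm is \emph{given} one copy of $|\psi\rangle$ for free; a naive hybrid argument degrades badly because that copy can be used Grover-style to amplify a tiny overlap, so the progress measure has to be sensitive to the entanglement between the adversary's workspace and the ``given'' register. Second, the oracle is a reflection rather than a classical-verification or phase oracle, which is strictly more powerful and forces one to reason about the full unitary action of $U$ rather than just a $0/1$ output bit. The secondary subtlety is the ``more than double'' restriction flagged in the footnote: the reduction from producing $k{+}1$ working programs out of $k$ copies to the two-copy cloning task loses a multiplicative factor of $2$, which is precisely what limits the current copy-protection bound to pirates who more than double the number of programs.
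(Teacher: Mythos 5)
Your overall architecture---a reflection-style oracle, a Complexity-Theoretic No-Cloning Theorem as the single workhorse, and quantum $t$-designs to make things explicit---matches the paper, and the money half is essentially right: forging is cloning, and the lower bound follows directly. Two caveats on the workhorse itself. The paper's statement is quantitatively stronger than your $1\to 2$ version: it bounds the query cost of turning $k$ copies into $\ell$ registers of total fidelity $k+\delta$, and this multi-copy form is what the applications actually consume. More importantly, the paper does \emph{not} prove it by a BBBV-style progress measure; it first bounds the cost of \emph{near-perfect} cloning via a purpose-built generalization of Ambainis's adversary method, and then boosts ``clone with non-negligible fidelity'' up to ``clone almost perfectly'' using fixed-point amplitude amplification. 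The difficulty you correctly flag (the free copies of $|\psi\rangle$ let the adversary amplify Grover-style) is precisely why a naive hybrid argument is not known to suffice; your proposal names the obstacle but does not resolve it.

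The genuine gap is in the copy-protection half. You assert that if $\{f_k\}$ is not learnable then the computation oracle ``can only be used productively on states close to $|\psi_k\rangle$,'' so piracy reduces to cloning. That implication is not justified, and taken literally your argument would also rule out pirating \emph{learnable} families, which is false---a pirate can use $|\psi_f\rangle$ to simulate oracle access to $f$ and then learn a fresh program. The paper handles this by running the reduction in the opposite direction: it builds a \emph{simulator} that converts any successful pirate into an efficient learner for $f$. The simulator mocks up a stand-in state $|\varphi\rangle^{\otimes k}$ drawn from a $t$-design (Theorem~\ref{tdesthm}), simulates the oracle using only black-box access to $f$, runs the pirate, and then argues a dichotomy over the $k+r$ output registers: by the No-Cloning Theorem plus $t$-design indistinguishability, at most $k$ registers can contain enough of $|\varphi\rangle$ to make essential use of the oracle (this is extracted via a BBBV-type argument with a clock register and amplitude amplification), so at least $r$ registers must still compute $f$ when the oracle is replaced by the identity---and those are oracle-free programs, contradicting non-learnability. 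Note that the $t$-design's role is to let the \emph{simulator} run without the oracle, not to make the bank or vendor efficient (in the oracle world they are trivially efficient, since the Haar-random states are baked into $U$). Finally, the ``more than double'' restriction comes from the counting $(1-2\delta)r>k$ in this dichotomy---you must give away $k$ possibly-oracle-dependent registers and still beat $1/2$ on a random register---not from a factor-of-$2$ loss in a reduction to two-copy cloning.
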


Here a \textquotedblleft quantum oracle,\textquotedblright\ as defined by
Aaronson and Kuperberg \cite{ak}, is just an infinite collection of unitary
operations $U=\left\{  U_{n}\right\}  _{n\geq1}$\ that can be applied in a
black-box fashion. \ Theorem \ref{main} implies that, if quantum money and
copy-protection are \textit{not} possible, then any proof of that fact will
require \textquotedblleft quantumly non-relativizing
techniques\textquotedblright: techniques that are sensitive to the presence of
a quantum oracle. \ Such a proof is almost certainly beyond present-day techniques.

However, we also go beyond oracle results, and propose \textit{the first
explicit candidate schemes for publicly-verifiable quantum money and for
copy-protecting the family of point functions}.\ \ Here a \textquotedblleft
point function\textquotedblright\ is a Boolean function $f_{s}:\left\{
0,1\right\}  ^{n}\rightarrow\left\{  0,1\right\}  $\ such that $f_{s}\left(
x\right)  =1$\ if and only if $x$ equals some secret string $s$.
\ Copy-protecting point functions has an interesting application for computer
security: it yields a way to distribute a password-authentication
program\ such that, not only can one not learn the password by examining the
program, one cannot even use the program to create additional programs with
the ability to \textit{recognize} the password.

Our candidate quantum money scheme is based on random stabilizer states; the
problem of counterfeiting the money is closely related to noisy decoding for
random linear codes over $\mathbb{GF}_{2}$. \ For copy-protecting point
functions, we actually give \textit{two} schemes: one based on random quantum
circuits (as recently studied by Harrow and Low \cite{harrowlow}), the other
based on hidden subgroups of the symmetric group. \ The key challenge, which
we leave unresolved, is to base the security of our schemes on a
\textquotedblleft standard\textquotedblright\ cryptographic assumption (for
example, the existence of pseudorandom functions secure against quantum
attack), as opposed to the tautological assumption that our schemes are secure!

Our results give the first complexity-theoretic evidence that quantum
copy-protection and publicly-verifiable quantum money are indeed possible.
\ On the other hand, the oracle results also help explain the
\textit{difficulty} of proving explicit schemes for these tasks secure. \ For
as we will see, proving security in the oracle world is already highly
nontrivial! \ Furthermore, \textit{any security proof for an explicit scheme
will need to include our oracle result as a special case}---since an attack on
an explicit scheme could always proceed by treating all the relevant circuits
as black boxes and ignoring their internal structure.

\subsection{Techniques\label{TECHNIQUES}}

In proving our oracle results, perhaps the most novel technical ingredient is
what we call the \textquotedblleft Complexity-Theoretic No-Cloning
Theorem\textquotedblright:

\begin{theorem}
[Complexity-Theoretic No-Cloning]\label{ctnc0}Let $\left\vert \psi
\right\rangle $\ be an $n$-qubit pure state. \ Suppose we are given the
initial state $\left\vert \psi\right\rangle ^{\otimes k}$ for some $k\geq1$,
as well as an oracle $U_{\psi}$\ such that $U_{\psi}\left\vert \psi
\right\rangle =-\left\vert \psi\right\rangle $\ and $U_{\psi}\left\vert
\phi\right\rangle =\left\vert \phi\right\rangle $\ for all $\left\vert
\phi\right\rangle $ orthogonal to $\left\vert \psi\right\rangle $. $\ $Then
for all $\ell>k$, to prepare $\ell$\ registers $\rho_{1},\ldots,\rho_{\ell}%
$\ such that%
\[
\sum_{i=1}^{\ell}\left\langle \psi\right\vert \rho_{i}\left\vert
\psi\right\rangle \geq k+\delta,
\]
we need%
\[
\Omega\left(  \frac{\delta^{2}\sqrt{2^{n}}}{\ell^{2}k\log k}-\ell\right)
\]
queries to $U_{\psi}$.
\end{theorem}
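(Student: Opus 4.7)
The plan is to combine a BBBV-style hybrid argument with optimal-cloning bounds, averaged over Haar-random $|\psi\rangle$. First, I would establish the no-oracle baseline: if the algorithm is given the trivial oracle $I$ in place of $U_\psi$, then Werner's optimal $(k,\ell)$-cloning formula---equivalently the symmetric-subspace identity $\mathbb{E}_\psi[|\psi\rangle\langle\psi|^{\otimes(k+1)}]=\Pi_{\mathrm{sym}}/\binom{2^n+k}{k+1}$---gives $\mathbb{E}_\psi[\sum_i\langle\psi|\rho_i|\psi\rangle]\le k+O(\ell k/2^n)$. This accounts for the additive $-\ell$ term: excess fidelity of that order is free, but anything beyond must be paid for by oracle queries.

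Next, I would compare the true run (using $U_\psi$) to the fictitious run (using $I$) by a hybrid argument. Since $U_\psi-I=-2|\psi\rangle\langle\psi|$ acts only on the query register, the perturbation introduced at step $t$ has Euclidean norm $2\sqrt{p_t(\psi)}$, where $p_t(\psi)$ is the expected $|\psi\rangle$-overlap of the query register in the fictitious run. Summing the perturbations over $q$ steps and applying Cauchy--Schwarz,
\[
\|\,|\phi_q^{U_\psi}\rangle-|\phi_q^{I}\rangle\,\|^{2}\le 4q\sum_{t=0}^{q-1}p_t(\psi).
\]
Because the observable $\sum_i |\psi\rangle\langle\psi|_{R_i}\otimes I$ has operator norm $\ell$, the total fidelity is $\ell$-Lipschitz in trace distance on the joint output, so taking expectations yields $\delta\lesssim\ell\sqrt{q\sum_t\mathbb{E}_\psi[p_t(\psi)]}$ plus the baseline contribution.

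The crux is bounding $\mathbb{E}_\psi[p_t(\psi)]$ in the fictitious branch, where the state at step $t$ has the form $V_t|\psi\rangle^{\otimes k}|0\rangle$ for a fixed unitary $V_t$. Using the same $\mathbb{E}_\psi[|\psi\rangle\langle\psi|^{\otimes(k+1)}]$ identity to pair the $|\psi\rangle\langle\psi|$ projector on the query register with the $k$ input copies, one obtains $\mathbb{E}_\psi[p_t(\psi)]=O(k/2^n)$; substituting an explicit quantum $(k{+}1)$-design for the full Haar measure (needed for the constructive application in the main theorem) costs only a $\log k$ factor. Solving $\delta\lesssim \ell q\sqrt{k\log k/2^n}$ for $q$ then yields the stated bound, with the $-\ell$ correction absorbing the baseline. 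I expect two main obstacles. First, passing rigorously from the Haar measure to a constructive $t$-design with the correct exponent---this is where the $\log k$ enters and is also why the paper explicitly advertises its $t$-design constructions. Second, the hybrid argument only directly controls $p_t$ in the fictitious branch, whereas the real algorithm may amplify $p_t$ substantially; closing the loop requires bootstrapping the bound self-consistently, and the resulting slack is what I expect gives the $\delta^{2}$ and $\ell^{2}$ dependence (instead of the Grover-like $\delta$ and $\ell$) in the final estimate.
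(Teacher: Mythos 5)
There is a fatal gap at the crux of your argument: the claim that in the oracle-free branch $\mathbb{E}_\psi[p_t(\psi)]=O(k/2^n)$ is false. The state at step $t$ is $V_t|\psi\rangle^{\otimes k}|0\rangle$, and nothing prevents $V_t$ from simply routing one of the $k$ \emph{given} copies of $|\psi\rangle$ into the query register, in which case $p_t(\psi)=1$ for every $\psi$. The symmetric-subspace identity does not rescue this: when you pair the query-register projector with the $k$ input copies, the operator you trace against $\Pi_{\mathrm{sym}}/\binom{2^n+k}{k+1}$ can have trace as large as $\binom{2^n+k}{k+1}$ itself (the swap example realizes this), so no $O(k/2^n)$ bound follows. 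This is exactly the point where the theorem stops being a routine corollary of BBBV: in BBBV the algorithm has no prior information about the marked item, so the average query mass on it is $1/N$; here the algorithm \emph{starts} with $k$ copies of the marked state and can query it with amplitude $1$ from the very first step, so the hybrid comparison between the $U_\psi$-run and the $I$-run can accumulate $\Theta(1)$ distance after a single query and yields no lower bound at all. You flag a related worry at the end ("the real algorithm may amplify $p_t$"), but the problem is already present in the fictitious branch, and no self-consistent bootstrapping can repair a step whose base case is wrong.

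This is also why the paper takes a different route (the proof itself is deferred to the full version, but the strategy is stated): first a lower bound for \emph{near-perfect} cloning via a generalization of Ambainis's adversary method, whose progress measure is the pairwise inner product $\langle\phi_t^{\psi}|\phi_t^{\psi'}\rangle$ --- this starts at $\langle\psi|\psi'\rangle^{k}$ rather than $1$ and so correctly accounts for the information carried by the $k$ initial copies, which the hybrid method cannot do; and second, a reduction showing that cloning with fidelity gain $\delta$ can be boosted to near-perfect cloning using Grover's fixed-point search, which is where the quadratic losses in $\delta$ and $\ell$ actually come from (not from "bootstrapping slack"). Your baseline step via Werner's optimal cloner is fine as far as it goes, and the observation that the excess fidelity achievable with zero queries is $O(\ell k/2^n)$ is correct, but it cannot be combined with a hybrid argument in the way you propose.
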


Intriguingly, Theorem \ref{ctnc0}\ can be seen as a common generalization of
the No-Cloning Theorem and the BBBV lower bound for quantum search
\cite{bbbv}.\footnote{Here by \textquotedblleft quantum
search,\textquotedblright\ we mean search for an unknown pure state
$\left\vert \psi\right\rangle $, which need not be a computational basis
state. \ As far as we know, this generalization of the usual Grover
problem\ was first studied by Farhi and Gutmann \cite{farhigutmann}.} \ It
reduces to the No-Cloning Theorem if we ignore the oracle $U$, and it reduces
to the BBBV lower bound if we ignore the initial state $\left\vert
\psi\right\rangle $.

The proof of Theorem \ref{ctnc0}\ proceeds in two steps. \ First we
lower-bound the query complexity of cloning $\left\vert \psi\right\rangle $
almost \textit{perfectly}, by using a generalization of Ambainis's quantum
adversary method \cite{ambainis} that we design specifically for the purpose.
\ Next we argue that, if we could even clone $\left\vert \psi\right\rangle
$\ with non-negligible fidelity, then with polynomially more queries we could
also clone $\left\vert \psi\right\rangle $ almost perfectly, by using a recent
fixed-point quantum search algorithm of Grover \cite{grover:fps}.

We regret that, due to space limitations, we are not able to include a proof
of Theorem \ref{ctnc0}\ in this extended abstract.

With Theorem \ref{ctnc0} in hand, it is not hard to show the existence of a
quantum oracle $U$\ relative to which a publicly-verifiable quantum money
scheme exists. \ We simply choose $n$-qubit quantum banknotes uniformly at
random under the Haar measure, and then \textquotedblleft
offload\textquotedblright\ all the work of preparing and recognizing the
banknotes onto the oracle. \ Theorem \ref{ctnc0} then implies that, even given
$k=\operatorname*{poly}\left(  n\right)  $ valid banknotes, a would-be
counterfeiter needs exponentially many queries to $U$ to prepare a $\left(
k+1\right)  ^{\text{st}}$\ banknote. \ Crucially, our oracle construction is
\textquotedblleft fair,\textquotedblright\ in the sense that the bank, the
customers, and the counterfeiters all have access to the same oracle $U$, and
none of them have any special knowledge about $U$ not shared by the others.
\ This is why we believe our result merits the informal interpretation we have
given it: namely, that any impossibility proof for quantum money would have to
be non-relativizing.

Showing the existence of a quantum oracle $U$ relative to which quantum
copy-protection works is a harder problem. \ As in the money case, we choose
$n$-qubit \textquotedblleft quantum programs\textquotedblright\ $\left\vert
\psi_{f}\right\rangle $\ uniformly at random under the Haar measure, and then
define a quantum oracle $U$ that is able both to prepare $\left\vert \psi
_{f}\right\rangle $\ given a description of $f$, and to evaluate $f\left(
x\right)  $\ given $\left\vert \psi_{f}\right\rangle $ and $x$. \ However, a
new difficulty is that some families of Boolean functions $\mathcal{F}$
\textit{cannot} be copy-protected: namely, those for which any $f\in
\mathcal{F}$\ can be efficiently learned using black-box access. \ Thus, our
proof somehow needs to explain why learnability is the only obstruction. \ Our
solution will be to construct a polynomial-time \textit{simulator}, which
takes an algorithm (in the oracle world)\ for pirating a quantum program
$\left\vert \psi_{f}\right\rangle $, and converts it into an algorithm (with
no oracle) that learns $f$ using only black-box access to $f$.

Among other things, the simulator needs the ability to \textquotedblleft mock
up\textquotedblright\ its own quantum state $\left\vert \varphi\right\rangle $
that can stand in\ for $\left\vert \psi_{f}\right\rangle $\ in a simulation of
the pirating algorithm, which in turn means that $\left\vert \varphi
\right\rangle ^{\otimes t}$\ should be indistinguishable from $t$ copies of a
Haar-random state for some fixed $t=\operatorname*{poly}\left(  n\right)  $.
\ As it turns out, precisely this problem---the construction of explicit
quantum states that behave like Haar-random states---has recently become a
major topic in quantum computing. \ So for example, Ambainis and Emerson
\cite{ae} gave an explicit construction of \textit{approximate quantum }%
$t$\textit{-designs}\ for arbitrary $t$: that is, finite ensembles of pure
states $\left(  p_{x},\left\vert \varphi_{x}\right\rangle \right)  $ such that%
\begin{align*}
\left(  1-\varepsilon\right)  \int_{\psi}\left(  \left\vert \psi\right\rangle
\left\langle \psi\right\vert \right)  ^{\otimes t}d\psi &  \leq\sum_{x}%
p_{x}\left(  \left\vert \varphi_{x}\right\rangle \left\langle \varphi
_{x}\right\vert \right)  ^{\otimes t}\\
&  \leq\left(  1+\varepsilon\right)  \int_{\psi}\left(  \left\vert
\psi\right\rangle \left\langle \psi\right\vert \right)  ^{\otimes t}d\psi
\end{align*}
where the integrals are with respect to the Haar measure. \ Our requirement is
slightly different: basically, we need that no algorithm that receives $t$
copies of $\left\vert \varphi\right\rangle $, \textit{and} makes $T$ queries
to an oracle that recognizes $\left\vert \varphi\right\rangle $, can decide
whether $\left\vert \varphi\right\rangle $\ was drawn from the explicit
distribution or the Haar measure. \ Both for that reason, and because our
construction was independent of \cite{ae}, in the full version of the paper we
give a self-contained proof of the following result:

\begin{theorem}
\label{tdesthm}Let $d$ be a positive integer. \ Then there exists a collection
of $n$-qubit pure states $\left\{  \left\vert \varphi_{x}\right\rangle
\right\}  _{x\in\left\{  0,1\right\}  ^{n\left(  d+1\right)  }}$\ such that:

\begin{enumerate}
\item[(i)] Given $x$ as input, the state $\left\vert \varphi_{x}\right\rangle
$\ can be prepared in time polynomial in $n$ and $d$.

\item[(ii)] Let $E$\ be any quantum algorithm that receives a state
$\left\vert \varphi\right\rangle ^{\otimes t}$\ as input, and also makes
$T$\ queries to a quantum oracle $U_{\varphi}$\ such that $U_{\varphi
}\left\vert \varphi\right\rangle =-\left\vert \varphi\right\rangle $ and
$U_{\varphi}\left\vert \phi\right\rangle =\left\vert \phi\right\rangle $\ for
all $\left\vert \phi\right\rangle $\ orthogonal to $\left\vert \varphi
\right\rangle $. \ Let $E\left(  \left\vert \varphi\right\rangle \right)
$\ represent the probability that $E$\ accepts. \ Then provided $t+2T\leq
\min\left\{  d/2,\sqrt{2^{n}/2}\right\}  $, we have%
\[
\left\vert
\begin{array}
[c]{c}%
\operatorname*{EX}_{x\in\left\{  0,1\right\}  ^{n\left(  d+1\right)  }}\left[
E\left(  \left\vert \varphi_{x}\right\rangle \right)  \right] \\
-\operatorname*{EX}_{\left\vert \psi\right\rangle \in\mu}\left[  E\left(
\left\vert \psi\right\rangle \right)  \right]
\end{array}
\right\vert \leq\frac{4\left(  t+2T\right)  ^{2}}{2^{n}}%
\]
where $\mu$\ is the Haar measure.
\end{enumerate}
\end{theorem}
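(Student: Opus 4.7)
The plan is to use \emph{random subset states}. Define
\[
|\varphi_x\rangle := \frac{1}{\sqrt{d+1}}\sum_{i=0}^{d}|x_i\rangle,
\]
where $x = (x_0,\ldots,x_d)\in\{0,1\}^{n(d+1)}$ encodes $d+1$ $n$-bit strings. Part~(i) is immediate: prepare $\frac{1}{\sqrt{d+1}}\sum_i|i\rangle|0^n\rangle$, XOR $x_i$ into the data register controlled on $i$, and uncompute the index register using the classical description~$x$.

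For part~(ii), I would first reduce the problem to a comparison of $s$-th moments, with $s := t + 2T$. Expanding each oracle call $U_\varphi = I - 2|\varphi\rangle\langle\varphi|$ makes the algorithm's unitary a polynomial $A(\rho)$ in $\rho := |\varphi\rangle\langle\varphi|$ of degree at most $T$; hence the acceptance probability
\[
E(|\varphi\rangle) = \mathrm{tr}\bigl[\,M\,A(\rho)\,(\rho^{\otimes t}\otimes|0\rangle\langle 0|)\,A(\rho)^{\dagger}\bigr]
\]
is a polynomial of total degree $s = t + 2T$ in the entries of $\rho$. Collecting monomials and padding lower-degree terms via $\mathrm{tr}(\rho)=1$, we may write $E(|\varphi\rangle) = \mathrm{tr}(L\,\rho^{\otimes s})$ for a fixed Hermitian $L$ with $\|L\|_\infty \leq 1$ (using that $L$ can be realized as a POVM element of a ``coherent'' algorithm that consumes $s$ copies of $|\varphi\rangle$ and makes no oracle queries, each query being simulated by a SWAP-based consumption of two copies). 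Consequently
\[
\bigl|\mathbb{E}_{x}[E(|\varphi_x\rangle)] - \mathbb{E}_{\mu}[E(|\psi\rangle)]\bigr| \leq \bigl\|\mathbb{E}_{x}[\rho_x^{\otimes s}] - \mathbb{E}_{\mu}[\rho_\psi^{\otimes s}]\bigr\|_1,
\]
so it suffices to bound the right-hand side by $4s^2/2^n$.

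For the moment bound, I would expand $|\varphi_x\rangle^{\otimes s} = (d+1)^{-s/2}\sum_{\vec{i}\in\{0,\ldots,d\}^s}|x_{i_1}\cdots x_{i_s}\rangle$ and take the expectation over the i.i.d.\ uniform $x_i$'s. The contribution of an index pair $(\vec{i},\vec{j})$ depends only on the joint equality pattern of the $2s$ entries. The ``generic'' terms---all $i_k,j_k$ pairwise distinct and all sampled strings $x_{i_k},x_{j_k}$ distinct---contribute a multiple of the projector onto the span of tensors supported on distinct basis vectors, matching the Haar moment $\Pi_{\mathrm{sym}}/\binom{2^n+s-1}{s}$ up to a $1 + O(s^2/2^n)$ normalization factor. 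Collisions among the indices occur with probability $O(s^2/d)$ and among the sampled values with probability $O(s^2/2^n)$; under $s\leq d/2$ and $s\leq\sqrt{2^n/2}$, each contributes $O(s^2/2^n)$ in trace norm, yielding the claimed bound. I expect the main obstacle to be the combinatorial bookkeeping: organizing the expansion by set partitions of $[2s]$ that record the equality pattern, bounding each resulting PSD summand by the probability of its collision event, and absorbing everything but the leading partition into the error, while simultaneously tracking the mismatch between the multinomial normalization $(d+1)^{-s}$ and the Haar normalization $1/\binom{2^n+s-1}{s}$.
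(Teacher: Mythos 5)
There is a fatal gap, and it lies in the construction itself rather than in the bookkeeping: random subset states of size $d+1$ simply do not satisfy part (ii). Your analysis asserts that index collisions, which ``occur with probability $O(s^{2}/d)$,'' contribute only $O(s^{2}/2^{n})$ in trace norm; but under the hypothesis $s\leq d/2$ the quantity $s^{2}/d$ is only bounded by $s/2$, not by anything like $s^{2}/2^{n}$, and this deviation is genuinely observable. Concretely, take $t=2$, $T=0$ (allowed whenever $d\geq4$) and let $E$ measure both copies in the computational basis, accepting iff the outcomes coincide. For your state, whose support has size $d+1$, the collision probability is $\approx 1/(d+1)$, whereas for a Haar-random state it is $2/(2^{n}+1)$; the advantage is therefore about $1/(d+1)$, which in the regime the paper actually needs ($d=\operatorname*{poly}(n)$, so that part (i) and the simulator of Section \ref{CPORACLE} are meaningful) dwarfs the claimed bound $16/2^{n}$. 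Even one copy suffices: $\vert\langle+\vert^{\otimes n}\vert\varphi_{x}\rangle\vert^{2}\geq(d+1)/2^{n}$ versus $1/2^{n}$ on average for Haar, violating the bound for every $d>4$. In short, a subset state of size $m$ incurs distinguishing advantage at least $\max\{1/(2m),\ (m-1)/2^{n}\}\gtrsim2^{-n/2}$, so no choice of $m$ can reach $O(s^{2}/2^{n})$. This is exactly why the paper uses a different object: the phase states $\frac{1}{\sqrt{2^{n}}}\sum_{y}e^{2\pi ip(y)/2^{n}}\vert y\rangle$ with $p$ a random polynomial of degree at most $d$ over $\mathbb{GF}(2^{n})$. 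These have full support and uniform amplitude magnitudes (so they pass your collision test and the $\vert+\rangle^{\otimes n}$ test exactly as Haar states do), and the $(d+1)$-wise independence of the values $p(y)$ is what forces the first $\lfloor d/2\rfloor$ moments to agree with those of a fully random phase state, which is in turn $O(s^{2}/2^{n})$-close to the Haar moments.

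A secondary gap: the step writing $E(\vert\varphi\rangle)=\operatorname{tr}(L\rho^{\otimes s})$ with $\Vert L\Vert_{\infty}\leq1$ is not justified. Expanding $U_{\varphi}=I-2\rho$ does give a degree-$s$ polynomial, but boundedness of that polynomial on the product states $\rho^{\otimes s}$ does not control the operator norm of $L$ (an $L$ supported on the antisymmetric subspace contributes nothing to any $\operatorname{tr}(L\rho^{\otimes s})$ yet can have arbitrarily large norm), and the reflection $U_{\varphi}$ cannot be simulated \emph{exactly} by consuming two fresh copies of $\vert\varphi\rangle$, so the ``coherent algorithm'' realization of $L$ as a POVM element does not go through as stated. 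Some genuine argument---for instance a hybrid over the $t+2T$ uses of $\rho$, each bounded via the closeness of the relevant moment operators---is needed to convert a moment bound into a bound on the acceptance probabilities of query algorithms.
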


For those who are curious, the explicit states in question are%
\[
\left\vert \varphi_{x}\right\rangle :=\frac{1}{\sqrt{2^{n}}}\sum
_{x\in\mathbb{GF}\left(  2^{n}\right)  }e^{2\pi ip\left(  x\right)  /2^{n}%
}\left\vert x\right\rangle ,
\]
where $p:\mathbb{GF}\left(  2^{n}\right)  \rightarrow\mathbb{GF}\left(
2^{n}\right)  $ is a univariate polynomial of degree at most $d$\ that is
encoded by the string $x\in\left\{  0,1\right\}  ^{n\left(  d+1\right)  }$,
and elements of $\mathbb{GF}\left(  2^{n}\right)  $\ are freely reinterpreted
as $n$-bit integers where relevant.

\subsection{Related Work\label{PREV}}

Recall that quantum money was first studied by Wiesner \cite{wiesner}. \ In
Wiesner's scheme, a central bank distributes \textquotedblleft quantum
banknotes,\textquotedblright\ each consisting of a unique serial number (which
is written down classically), together with $n$ polarized photons in the
states $\left\vert 0\right\rangle $, $\left\vert 1\right\rangle $, $\left\vert
+\right\rangle =\frac{\left\vert 0\right\rangle +\left\vert 1\right\rangle
}{\sqrt{2}}$, or $\left\vert -\right\rangle =\frac{\left\vert 0\right\rangle
+\left\vert 1\right\rangle }{\sqrt{2}}$. \ The bank also stores, in a secure
location, a database of all the serial numbers together with classical
descriptions of the associated quantum states. \ Whenever a banknote is
returned to the bank, the note can be measured (using the secure database) to
verify its authenticity. \ On the other hand, using the uncertainty principle,
it is possible to show that, starting from $k$\ banknotes, any attempt to
forge $k+1$\ banknotes that all pass the authentication test can succeed with
probability at most $\left(  3/4\right)  ^{n}$.

Let us point out two striking advantages of Wiesner's scheme. \ Firstly, the
scheme requires only single coherent qubits and one-qubit measurements; there
is no need for any entanglement. \ For this reason, the scheme might be
practical long before universal quantum computing. \ Secondly, the security of
the scheme is \textit{information-theoretic}---guaranteed by the laws of
quantum physics---rather than computational.

An obvious drawback of Wiesner's scheme is its need for a giant secret
database maintained by the bank. \ But in 1982, Bennett, Brassard, Breidbart,
and Wiesner\ \cite{bbbw} (henceforth BBBW) showed how to avoid the giant
database, at the cost of making the security of the quantum money
computational rather than information-theoretic. \ In modern terms, their
proposal was this. \ The bank fixes, once and for all, a secret random seed
$s$. \ It then distributes banknotes, each of the form $\left\vert
y\right\rangle \left\vert \psi_{g_{s}\left(  y\right)  }\right\rangle
$,\ where $y\in\left\{  0,1\right\}  ^{n}$\ is a unique serial number for the
banknote, $g_{s}:\left\{  0,1\right\}  ^{n}\rightarrow\left\{  0,1\right\}
^{n}$\ is a pseudorandom function, and $\left\vert \psi_{g_{s}\left(
y\right)  }\right\rangle $\ is the state obtained by starting from
$g_{s}\left(  y\right)  $, grouping the $n$ bits into $n/2$\ blocks of two,
and mapping each $00$\ to $\left\vert 0\right\rangle $, $01$\ to $\left\vert
1\right\rangle $,\ $10$\ to $\left\vert +\right\rangle $, and $11$\ to
$\left\vert -\right\rangle $.

Using its knowledge of $s$, the bank can verify the authenticity of any note
$\left\vert y\right\rangle \left\vert \psi_{g_{s}\left(  y\right)
}\right\rangle $, by computing $g_{s}\left(  y\right)  $\ and then measuring
each qubit of $\left\vert \psi_{g_{s}\left(  y\right)  }\right\rangle $\ in
the appropriate basis. \ But suppose $g_{s}$\ were a truly random function.
\ Then by the same argument as for Wiesner's original scheme, given any $k$
banknotes, no quantum operation could forge a $\left(  k+1\right)
^{\text{st}}$\ note with probability more than $\left(  3/4\right)  ^{n/2}%
$\ of passing the authentication test. \ This means that, if there
\textit{were} a quantum operation to forge high-quality banknotes, then that
operation could be used to distinguish $g_{s}$\ from a truly random function.
\ And therefore, assuming $g_{s}$ is secure against polynomial-time quantum
adversaries, there can be no polynomial-time quantum algorithm to forge
banknotes that pass the authentication test with non-negligible probability.

However, the BBBW scheme still has a serious drawback: namely that $s$, which
is needed for the authentication procedure, must remain a closely-guarded
secret. \ And thus it would presumably be unwise, for example, to install the
authentication devices in convenience-store cash registers. \ What we really
want is a scheme where the procedure for \textit{authenticating} the money is
completely public, and only the procedure for minting the money is secret.

Bennett et al.\ \cite{bbbw} presented a candidate for such a
publicly-verifiable quantum money scheme,\footnote{Bennett et al.\ described
their public-key scheme in terms of \textquotedblleft subway
tokens\textquotedblright\ rather than money---since if we want to authenticate
the tokens using single-qubit measurements only, then the authentication test
necessarily destroys the tokens and prevents their reuse. \ On the other hand,
supposing we could perform an entangled measurement on all $n$ qubits in a
token, it would be possible to authenticate the token while preserving its
quantum coherence. \ For this reason, the token could be used as money.} which
was based on the hardness of factoring Blum integers.\footnote{More generally,
their scheme could be based on any \textit{trapdoor collision-resistant hash
function}: that is, a CRHF such that one can efficiently sample collision
pairs using some hidden trapdoor information.} \ Unfortunately, their scheme
was insecure for two reasons. \ First, we now know that factoring is in
quantum polynomial time! \ But even were we to base the scheme on some other
cryptographic primitive, Bennett et al.\ pointed out that it could be broken
by an adversary who is able to make entangled measurements on all of the
qubits in a banknote. \ The question of whether secure quantum money with
public authentication is possible has remained open for $30$ years.

Concurrently with our work, there has been a recent renewal of interest in the
quantum money problem. \ In his PhD thesis, Stebila \cite{stebila} provides a
lucid overview of quantum money, and explains why our Complexity-Theoretic
No-Cloning Theorem implies the existence of a quantum oracle relative to which
publicly-verifiable quantum money is possible.\footnote{Indeed, our original
interest was in copy-protection; it was Stebila, along with M. Mosca, who
pointed out to us the application to unforgeable money.}

As far as we know, the idea of using quantum mechanics to copy-protect
software\ is original to this work.

\subsection{Organization\label{ORG}}

The rest of this extended abstract is organized as follows. \ Section
\ref{PRELIM} formally defines quantum money and copy-protection schemes and
investigates their basic properties, and also recalls some preliminaries from
cryptography and quantum information. \ Section \ref{MONEYINT} considers
quantum money schemes: our explicit candidate proposal based on random
stabilizer states in Section \ref{MONEYSCHEME}, and our oracle result in
Section \ref{MONEYORACLE}. \ Section \ref{CPINT} then discusses quantum
copy-protection: the candidate schemes for copy-protecting point functions in
Section \ref{POINTFN}, and the oracle result in Section \ref{CPORACLE}. \ We
conclude in Section \ref{OPEN}\ with a list of open problems. \ We regret
that, because of space limitations, much of the paper's technical content
(including the proof of the Complexity-Theoretic No-Cloning Theorem and the
explicit construction of quantum $t$-designs) has had to be relegated to the
full version.

\section{Preliminaries\label{PRELIM}}

For simplicity, in this paper we restrict ourselves to nonuniform (circuit)
computation. \ Given two mixed states $\rho$\ and $\sigma$, the \textit{trace
distance} $\left\Vert \rho-\sigma\right\Vert _{\operatorname*{tr}}$\ equals
the maximum, over all measurements $M$, of the variation distance $\left\Vert
M\left(  \rho\right)  -M\left(  \sigma\right)  \right\Vert $ between the
probability distributions $M\left(  \rho\right)  ,M\left(  \sigma\right)
$\ over measurement outcomes obtained by applying $M$ to $\rho$\ and $\sigma
$\ respectively. \ We will use the following lemma of Aaronson \cite{aar:adv}:

\begin{lemma}
[\textquotedblleft Almost As Good As New Lemma\textquotedblright%
]\label{goodasnew}Suppose a measurement on a mixed state $\rho$\ yields a
particular outcome with probability $1-\varepsilon$. \ Then after the
measurement, one can recover a state $\widetilde{\rho}$\ such that $\left\Vert
\widetilde{\rho}-\rho\right\Vert _{\operatorname*{tr}}\leq\sqrt{\varepsilon}$.
\end{lemma}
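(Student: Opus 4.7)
The plan is to reduce to the pure-state case by purification, realize the measurement coherently via an isometry, and then bound the disturbance using the relationship between fidelity and trace distance.

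First I would purify $\rho$ to a pure state $|\psi\rangle$ on a Hilbert space enlarged by a reference register, noting that partial trace over the reference recovers $\rho$. Next, I would model the given measurement as a Naimark-dilated isometry $V$ followed by a computational-basis measurement on an ancilla, so that
\[
V|\psi\rangle \;=\; \sum_i |i\rangle \otimes |\phi_i\rangle,
\]
where the hypothesized outcome corresponds (say) to $i=1$, with $\langle \phi_1 | \phi_1 \rangle = 1-\varepsilon$. The recovery procedure is the obvious one: perform the measurement, and then apply $V^\dagger$ (extending $V$ to a unitary on a large enough space) to the post-measurement state.

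The key calculation is to bound how far $V|\psi\rangle$ is from the ``successful'' branch $|1\rangle\otimes|\phi_1\rangle/\sqrt{1-\varepsilon}$. Their inner product equals $\langle \phi_1|\phi_1\rangle/\sqrt{1-\varepsilon}=\sqrt{1-\varepsilon}$, so using the pure-state identity $\||\alpha\rangle\langle\alpha|-|\beta\rangle\langle\beta|\|_{\operatorname{tr}}=\sqrt{1-|\langle\alpha|\beta\rangle|^2}$, the trace distance between these two pure states is exactly $\sqrt{\varepsilon}$. Applying the unitary completion of $V^\dagger$ to both preserves this trace distance; the first becomes $|\psi\rangle$ (back to the original purification), and the second becomes the post-recovery purification $|\widetilde{\psi}\rangle$. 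Tracing out the reference register and the ancilla is a quantum operation, hence monotonically non-increases trace distance, so $\|\widetilde{\rho}-\rho\|_{\operatorname{tr}}\le \sqrt{\varepsilon}$, as desired.

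There is no real obstacle here; the only subtlety is to be careful that the ``measurement'' in the statement is allowed to be a general POVM, which is exactly what the Naimark dilation step handles, and to note that the reversal procedure is a physically implementable quantum operation (namely: apply the measurement, then apply $V^\dagger$ conditioned on storing the outcome coherently, or equivalently reverse $V$ on the post-measurement state when the ``good'' outcome is obtained). Once the purification/dilation setup is in place, the proof is essentially the one-line inner-product computation above combined with the fidelity-to-trace-distance conversion for pure states and monotonicity of trace distance under CPTP maps.
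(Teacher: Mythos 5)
Your proof is correct and is essentially the same argument as the one the paper cites from \cite{aar:adv}: purify, realize the measurement as a unitary/isometry plus ancilla measurement, compute the overlap $\sqrt{1-\varepsilon}$ between the original purification and the renormalized ``good'' branch, convert to trace distance $\sqrt{\varepsilon}$, and finish by unitary invariance and monotonicity under partial trace. No gaps.
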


See Nielsen and Chuang \cite{nc}\ for other quantum information concepts used
in this paper.

In what follows, we will sometimes use the assumption that there exists a
pseudorandom function family secure against quantum adversaries. \ The
following theorem helps to justify that assumption.

\begin{theorem}
\label{quantumprf}Suppose there exists a one-way function $A:\left\{
0,1\right\}  ^{n}\rightarrow\left\{  0,1\right\}  ^{n}$\ that is secure
against $2^{n^{\Omega\left(  1\right)  }}$-time quantum adversaries. \ Then
there also exists a family $f_{s}:\left\{  0,1\right\}  ^{n}\rightarrow
\left\{  0,1\right\}  ^{n}$\ of pseudorandom functions, parameterized by a
seed $s\in\left\{  0,1\right\}  ^{\operatorname*{poly}\left(  n\right)  }$,
that is secure against $2^{n}$-time quantum adversaries. \ (Here $A$\ and
$f_{s}$\ are both computable in classical polynomial time.)
\end{theorem}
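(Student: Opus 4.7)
The plan is to bootstrap from the one-way function to a pseudorandom function via the standard cryptographic pipeline of Goldreich--Levin, H\aa stad--Impagliazzo--Levin--Luby (HILL), and Goldreich--Goldwasser--Micali (GGM), verifying at each stage that the reduction still goes through against quantum adversaries and that the resulting security level, after accounting for polynomial blowups, remains at least $2^{n}$.

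First, I would extract a quantumly secure hard-core predicate from $A$ using Goldreich--Levin. The classical reduction is a black-box one that, from a distinguisher for the inner-product bit, recovers a preimage; Adcock and Cleve observed that this argument lifts to quantum adversaries essentially unchanged (one replaces classical guessing by a quantum query and invokes a quantum list-decoding step). Combined with HILL, this produces a pseudorandom generator $G:\{0,1\}^{m}\rightarrow\{0,1\}^{2m}$ from $A$ whose quantum distinguishing advantage is polynomially related to the quantum inversion probability of $A$ on inputs of some length $n=\operatorname{poly}(m)$; the hybrid arguments in HILL are unchanged by quantumness because each hybrid differs from the next by a single swap that an efficient quantum distinguisher could exploit just as a classical one. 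Choosing $m=n^{\varepsilon}$ for a small enough $\varepsilon$ turns the $2^{n^{\Omega(1)}}$ security of $A$ into $2^{m^{\Omega(1)}}$ security of $G$ with plenty of slack.

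Second, I would apply the GGM tree construction to $G$ to obtain the PRF family $f_{s}$. Let $G(s)=G_{0}(s)\Vert G_{1}(s)$; then $f_{s}(x_{1}\cdots x_{n}):=G_{x_{n}}\cdots G_{x_{1}}(s)$. The security reduction is the usual $n$-step hybrid: replacing the label at each internal node by a uniformly random string, node-by-node in BFS order, and arguing that any adversary that distinguishes $f_{s}$ from a truly random function also distinguishes some pair of adjacent hybrids, hence breaks $G$. Since the theorem only posits that the resulting PRF resist ordinary (classically queried) $2^{n}$-time quantum distinguishers, the classical hybrid argument is enough: the reduction simulates each oracle query to the hybrid using at most $n$ invocations of $G$, so a $2^{n}$-time distinguisher against $f_{s}$ yields a $\operatorname{poly}(n)2^{n}$-time quantum distinguisher against $G$, which is within the $2^{m^{\Omega(1)}}$ regime once $m$ is picked as above.

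The main obstacle is simply tracking the security loss through the two reductions and making sure that the final seed length and input length can be adjusted to swallow the polynomial factors, so that a one-way function secure against $2^{n^{\Omega(1)}}$-time quantum adversaries suffices to build a PRF secure against $2^{n}$-time quantum adversaries. A potential subtlety---whether the GGM reduction remains sound if the adversary is allowed to query $f_{s}$ in quantum superposition---does not arise here because the theorem concerns only classically queried quantum distinguishers; the stronger ``quantum query'' setting would require more recent machinery that is not needed for the applications to quantum money and copy-protection elsewhere in the paper.
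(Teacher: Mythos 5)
Your pipeline (Goldreich--Levin and HILL to get an exponentially secure PRG, then a GGM tree) is the same one the paper sketches, but your justification of the GGM step has a genuine gap---and it is exactly the issue the paper flags in the paragraph immediately following the theorem, where it warns that the Goldreich--Goldwasser--Micali reduction does \emph{not} go through against quantum adversaries and instead invokes the Razborov--Rudich version. First, your escape hatch of restricting to \textquotedblleft classically queried\textquotedblright\ distinguishers is not the intended model: the paper's remark that a quantum adversary can examine more than $\operatorname{poly}(n)$ outputs of $f_{s}$ only makes sense if superposition access to $f_{s}$ is allowed, and the query-secure money schemes elsewhere in the paper do hand the counterfeiter a device that evaluates $g_{s}$ on whatever (possibly superposed) serial numbers it supplies. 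Second, even within your restricted model the accounting fails: a $2^{n}$-time adversary makes up to $2^{n}$ classical queries, so the node-by-node hybrid you describe has $2^{\Theta(n)}$ steps and the advantage against $G$ degrades by a factor of $2^{\Theta(n)}$, not $\operatorname{poly}(n)$; you track only the running-time blowup, and you state that only polynomial factors need to be swallowed. Relatedly, your parameter choice $m=n^{\varepsilon}$ with $\varepsilon$ small points the wrong way: to absorb an exponential loss the PRG seed length must be a sufficiently \emph{large} polynomial in the PRF input length.

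The repair is the one the paper cites from Razborov--Rudich: take $G$ with seed length $m=n^{C}$ for $C$ large enough that $G$'s $2^{m^{\Omega(1)}}$ security comfortably exceeds $2^{O(n)}$, and run the hybrid argument over \emph{all} $2^{O(n)}$ nodes of the depth-$n$ tree, which the reduction can afford to sample outright. Because this hybrid never relies on bounding which nodes the adversary visits, it is indifferent to whether queries are classical or in superposition, and the $2^{O(n)}$-fold advantage loss is swallowed by the PRG's security. With that substitution (and the corrected direction of the parameter scaling), your argument coincides with the paper's.
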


\begin{proof}
[Proof Sketch]H\aa stad et al.\ \cite{hill}\ showed that if $2^{n^{\Omega
\left(  1\right)  }}$-secure one-way functions exist, then so do
$2^{n^{\Omega\left(  1\right)  }}$-secure\ pseudorandom
generators.\ \ Razborov and Rudich \cite{rr}\ showed that if $2^{n^{\Omega
\left(  1\right)  }}$-secure pseudorandom generators exist, then so do $2^{n}%
$-secure\ pseudorandom function families (with polynomial seed length).
\ Since both of these reductions are \textquotedblleft
black-box,\textquotedblright\ they go through essentially without change if
the adversary is quantum.
\end{proof}

Interestingly, the reduction of Goldreich, Goldwasser, and Micali \cite{ggm},
from $f\left(  n\right)  $-secure pseudorandom\ generators to $f\left(
n\right)  ^{\Omega\left(  1\right)  }/\operatorname*{poly}\left(  n\right)
$-secure pseudorandom functions with seed length $n$, does \textit{not} go
through if the adversary is quantum.\footnote{This is because the GGM
reduction makes essential use of the fact that a polynomial-time adversary can
examine only $\operatorname*{poly}\left(  n\right)  $ outputs of the function
$f_{s}$---something that is manifestly false if the adversary is quantum.}
\ We leave as an open problem whether a \textquotedblleft
strong,\textquotedblright\ GGM-style reduction from PRGs to PRFs can be proved
in the quantum setting.

\subsection{Quantum Money\label{MONEYPRELIM}}

Intuitively, a public-key quantum money scheme is a scheme in which

\begin{enumerate}
\item[(1)] quantum banknotes can be efficiently produced by a central bank,

\item[(2)] there exists a polynomial-time quantum algorithm for authenticating
the banknotes, which is completely public, and

\item[(3)] given as input $k$ valid banknotes, a polynomial-time counterfeiter
cannot produce $k+1$\ valid banknotes that have non-negligible probability of
passing the authentication test.
\end{enumerate}

We now give a formal definition.

\begin{definition}
\label{moneydef}A quantum money scheme with key size $n$ consists of the following:

\begin{itemize}
\item A quantum circuit $B$ of size $O\left(  \operatorname*{poly}\left(
n\right)  \right)  $ (the \textquotedblleft bank\textquotedblright), which
takes a string $s\in\left\{  0,1\right\}  ^{n}$\ (the \textquotedblleft secret
key\textquotedblright) as input, and produces a classical string $e_{s}$\ (the
\textquotedblleft public key\textquotedblright) and mixed state $\rho_{s}%
$\ (the \textquotedblleft banknote\textquotedblright) as output.\footnote{One
can of course generalize the definition to let $e_{s}$\ be randomized or even
a quantum state, and possibly correlated with $\rho_{s}$\ as well. \ However,
we will not need the additional freedom in this paper.}

\item A quantum circuit $A$ of size $O\left(  \operatorname*{poly}\left(
n\right)  \right)  $ (the \textquotedblleft authenticator\textquotedblright),
which takes a string $e$\ and state $\rho$\ as input and either accepts or rejects.
\end{itemize}

We say $\left(  B,A\right)  $\ has completeness error $\varepsilon$ if
$A\left(  e_{s},\rho_{s}\right)  $ accepts with probability at least
$1-\varepsilon$\ for all $s$. \ We say $\left(  B,A\right)  $\ has soundness
error $\delta$\ if for all quantum circuits $C$ of size $O\left(
\operatorname*{poly}\left(  n\right)  \right)  $ (the \textquotedblleft
counterfeiter\textquotedblright) and all $k,r=O\left(  \operatorname*{poly}%
\left(  n\right)  \right)  $, the following holds. \ Assume $C$ takes
$\rho_{s}^{\otimes k}$ as input, and outputs a state $\sigma_{s}$\ on
$k+r$\ registers. \ For $i\in\left[  k+r\right]  $, let $\sigma_{s}^{i}%
$\ denote the contents of the $i^{\text{th}}$\ register, and let $p_{i}$\ be
the probability that $A\left(  e_{s},\sigma_{s}^{i}\right)  $\ accepts,
averaged over all $s\in\left\{  0,1\right\}  ^{n}$. \ Then $\sum_{i=1}%
^{k+r}p_{i}\leq k+\delta$.

We call $\left(  B,A\right)  $\ public-key if $C$ also receives $e_{s}$\ as
input, and private-key otherwise. \ If $\left(  B,A\right)  $\ is private-key,
we call it query-secure if $C$\ has access to an oracle that takes a state
$\sigma$\ as input and simulates $A\left(  e_{s},\sigma\right)  $ (that is,
accepts with the same probability and returns the same post-measurement state
$\widetilde{\sigma}$).\footnote{Note that any public-key scheme is also
query-secure, since we can hardwire a description of $A$ into $C$.}
\end{definition}

We make a few remarks on Definition \ref{moneydef}. \ First, it is obvious
that no money scheme exists where the states $\rho_{s}$\ are
classical.\footnote{Furthermore, no query-secure scheme can exist where the
states $\rho_{s}$\ have $O\left(  \log n\right)  $\ qubits. \ For in that
case, a counterfeiter could reconstruct $\rho_{s}$ in polynomial time, by
first generating a tomographically complete set of states, and then sending
several copies of each state to $A$ to estimate the probability that each one
is accepted.} \ Second, if a money scheme has completeness error $\varepsilon
$, it follows from Lemma \ref{goodasnew} that the authentication procedure can
return a banknote $\widetilde{\rho}_{s}$\ such\ that $\left\Vert
\widetilde{\rho}_{s}-\rho_{s}\right\Vert \leq\sqrt{\varepsilon}$. \ This means
that the same banknote can be verified $\Omega\left(  1/\sqrt{\varepsilon
}\right)  $\ times before it needs to be replaced. \ In this paper, we will
generally be interested in schemes with perfect completeness.

Third, we will generally want the soundness error $\delta$\ to be negligible
(that is, $o\left(  1/p\left(  n\right)  \right)  $\ for all polynomials $p$).
\ If $\delta$\ is negligible, then it is easy to see that, starting from
$\rho_{s}^{\otimes k}$, no polynomial-time counterfeiter $C$ can ever increase
its \textquotedblleft wealth\textquotedblright\ (defined as the expected
number of states in $C$'s possession that $A$ accepts) by more than a
negligible amount in expectation. \ Note that this is true even if the states
output by $C$ are entangled; our definition automatically accounts for this possibility.

We now discuss some examples. \ The BBBW scheme \cite{bbbw}, discussed in
Section \ref{PREV}, is a private-key quantum money scheme. \ We therefore have
the following:

\begin{theorem}
[implicit in \cite{bbbw}]\label{bbbwthm}If there exists a pseudorandom
function family secure against quantum adversaries, then there exists a
private-key quantum money scheme with perfect completeness and exponentially
small soundness error.
\end{theorem}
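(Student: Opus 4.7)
The plan is to verify that the BBBW scheme of Section~\ref{PREV} meets Definition~\ref{moneydef}. Take the secret key $s$ to be the PRF seed; the banknote $\rho_s$ is the classical--quantum mixture of $|y\rangle\langle y|\otimes|\psi_{g_s(y)}\rangle\langle\psi_{g_s(y)}|$ over uniformly random serial numbers $y\in\{0,1\}^n$, so that $\rho_s^{\otimes k}$ is equivalent to $k$ independent banknotes with fresh random serials (distinct with probability $1-2^{-\Omega(n)}$). The authenticator has $s$ hardwired, reads $y$ from its input's serial-number register, recomputes $g_s(y)$, and measures each of the $n/2$ qubits in the basis prescribed by the corresponding two bits. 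Perfect completeness is immediate. Soundness I would prove in two stages.

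\emph{Stage 1 (information-theoretic).} Replace $g_s$ by a truly random function $g$. The $k$ banknotes given to $C$ then consist of $k$ independent uniformly random $n/2$-qubit Wiesner states attached to (almost surely distinct) serials $y_1,\dots,y_k$. Group $C$'s $k+r$ output registers by the serial each claims. For outputs claiming a fresh serial $y'\notin\{y_1,\dots,y_k\}$, the value $g(y')$ is uniform and independent of $C$'s view, so each such register passes authentication with probability at most $2^{-n/2}$ by direct computation. For outputs claiming a seen serial $y_j$, the standard Wiesner $1\to 2$ cloning bound of $(3/4)^{n/2}$ combined with the elementary inequality $\mathbb{E}[W_j]\le 1+(c_j-1)\Pr[W_j\ge 2]$ (where $c_j$ is the group size) bounds the expected number of accepting registers in the group by $1+(c_j-1)(3/4)^{n/2}$. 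Summing over groups gives $\sum_i p_i\le k+(k+r)((3/4)^{n/2}+2^{-n/2})=k+2^{-\Omega(n)}$.

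\emph{Stage 2 (computational reduction).} Suppose for contradiction that a polynomial-size $C$ achieves wealth at least $k+\delta$ with non-negligible $\delta$ against the real BBBW scheme. Build a quantum distinguisher $D^g$ that samples fresh random serials, classically queries $g$ at each to prepare the $k$ input banknotes, runs $C$, queries $g$ at each output's claimed serial to authenticate the $k+r$ output registers, and accepts iff the empirical wealth exceeds $k+\delta/2$. Chernoff-amplifying over $\operatorname{poly}(n)/\delta^2$ independent instances produces a $\operatorname{poly}(n)$-time $D$ whose acceptance probability differs by $\Omega(1)$ between $g=g_s$ and $g$ truly random, contradicting the assumed quantum security of the PRF family.

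The main obstacle is Stage~1, and specifically the need to bound $\sum_i p_i$ rather than just the joint probability that all $k+r$ claimed notes pass: a naive union bound over $(k+1)$-subsets blows up combinatorially for $k,r=\operatorname{poly}(n)$. The serial-number decomposition above sidesteps this, but relies on the observation that because the $k$ input Wiesner states are a tensor product over distinct serials, the reduced density matrix of any output-register group claiming serial $y_j$ is, in expectation over the other inputs, the output of a single-serial cloner acting on one copy of $|\psi_{g(y_j)}\rangle$ and an ancilla independent of $g(y_j)$. Formalising this---ruling out any benefit from $C$'s entanglement across output registers from different serial groups---is the only genuinely new technical work; everything else reduces to Wiesner's original analysis applied qubit-by-qubit.
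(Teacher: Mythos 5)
Your proof takes essentially the same route the paper relies on (sketched in Section~\ref{PREV} rather than written out): replace $g_s$ by a truly random function, invoke Wiesner's information-theoretic $(3/4)^{n/2}$ cloning bound, and turn any successful counterfeiter into a distinguisher against the PRF; your serial-group decomposition is the right way to adapt that argument to the expectation-based soundness criterion of Definition~\ref{moneydef}. The one quantitative slip is that $\Pr[W_j\ge 2]$ must be bounded by a union bound over pairs of registers in the group, giving $\binom{c_j}{2}(3/4)^{n/2}$ rather than $(3/4)^{n/2}$ outright, but since $c_j\le k+r=\operatorname{poly}(n)$ the soundness error remains exponentially small and the theorem follows.
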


However, the BBBW scheme is \textit{not} query-secure. \ The reason is simple:
given a banknote of the form $\left\vert y\right\rangle \left\vert \psi
_{g_{s}\left(  y\right)  }\right\rangle $, a counterfeiter can learn a
classical description of $\left\vert \psi_{g_{s}\left(  y\right)
}\right\rangle $, by rotating each qubit $i$ in turn while leaving the other
$n/2-1$\ qubits fixed, and repeatedly feeding the result to the
authenticator\ $A$\ until it has ascertained the correct state of the $i^{th}%
$\ qubit. \ This works because $A$ always measures the qubits in the correct
bases, and therefore does not damage the qubits that are not being rotated.
\ Of course, once the counterfeiter has learned a classical description of
$\left\vert \psi_{g_{s}\left(  y\right)  }\right\rangle $, it can then produce
as many copies of $\left\vert y\right\rangle \left\vert \psi_{g_{s}\left(
y\right)  }\right\rangle $\ as it likes.

In the full version of this paper, we will give a private-key quantum money
scheme that \textit{is} query-secure, assuming the existence of pseudorandom
functions secure against quantum adversaries. \ We will also prove the
following result, which is not entirely trivial:

\begin{theorem}
\label{supercounterfeit}Any quantum money scheme satisfying Definition
\ref{moneydef} (even a private-key one) must rely on \textit{some}
computational assumption.
\end{theorem}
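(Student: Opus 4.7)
My plan is to exhibit, for any scheme $(B, A)$ satisfying Definition~\ref{moneydef}, a polynomial-time counterfeiting attack that succeeds given oracle access to a sufficiently powerful complexity class. Showing this attack breaks soundness establishes that the scheme's security in fact implies some complexity-theoretic separation, which is exactly the required computational assumption. I would split the argument based on whether the scheme is public-key or private-key, and handle each case by a different reduction from counterfeiting to computational hardness.

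In the public-key case the counterfeiter receives $e_s$, and the promise problem ``given $e$, find a state $\sigma$ with $\Pr[A(e, \sigma)\text{ accepts}] \geq 1/2$'' lies in $\mathrm{QMA}$: a witness is any such $\sigma$, and completeness guarantees that $\rho_s$ is always one. With a $\mathrm{BQP}^{\mathrm{QMA}}$ oracle and the standard bit-by-bit search-to-decision reduction on the description of a (polynomial-size) circuit preparing $\sigma$, a polynomial-time counterfeiter produces $k + r$ valid banknotes in sequence, exceeding the soundness bound. Security in this setting therefore implies, at minimum, $\mathrm{QMA} \not\subseteq \mathrm{BQP}$, which is a nontrivial computational assumption.

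In the private-key case the counterfeiter does not see $e_s$ and has only $\rho_s^{\otimes k}$, making the argument more delicate. The plan is to equip the counterfeiter with a $\#\mathrm{P}$-style oracle that, using the known description of $B$, samples a candidate seed $s'$ from the Bayes posterior on $s$ given the observed $\rho_s^{\otimes k}$ (under the uniform prior), and then runs $B(s')$ to manufacture fresh banknotes. The main obstacle, and the technical heart of the proof, is that the posterior over $s$ can be genuinely broad (as in Wiesner-style schemes), so naive posterior sampling may fail to yield valid forgeries. I would address this by isolating the ``effective'' portion of $s$ that controls the accept subspace of $A(e_s, \cdot)$: since $A$ is a poly-size quantum circuit, its behaviour depends on only a poly-size summary of $e_s$, and I would argue that with unbounded computation this summary is extractable from $\rho_s^{\otimes k}$ whenever the scheme's soundness error is negligible---otherwise a straightforward symmetrization of the $k$ input copies would already produce a $(k+1)$-st register with enough residual overlap to violate soundness, yielding a contradiction.
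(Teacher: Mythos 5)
First, note that the extended abstract only \emph{states} Theorem~\ref{supercounterfeit} and defers its proof to the full version, so there is no in-paper proof to compare against. The closest analogue is Proposition~\ref{superpirate}, whose proof exploits the fact that programs for functionally distinct $f,g$ must have fidelity bounded away from $1$, so that the states $\rho_f^{\otimes k}$ become nearly orthogonal and an Ettinger--H{\o}yer--Knill-type measurement identifies $f$. No such forcing exists for money---two keys $s\neq s'$ may yield identical banknotes, or far-apart banknotes both of which pass $A(e_s,\cdot)$---which is precisely why the paper flags the money version as ``not entirely trivial.'' Your public-key branch is essentially sound modulo standard technicalities (the witness you search for is a classical circuit description, so the relevant class is $\mathsf{QCMA}$ rather than $\mathsf{QMA}$, and bit-by-bit search against a promise-problem oracle needs the usual care), and is in any case more than the theorem requires: exhibiting an unbounded counterfeiter suffices.

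The private-key branch, however, has a genuine gap at exactly the point you identify as the technical heart. Your fallback claim---that if the relevant summary of $s$ were \emph{not} extractable from $\rho_s^{\otimes k}$, then ``a straightforward symmetrization of the $k$ input copies would already produce a $(k+1)$-st register with enough residual overlap to violate soundness''---is false. Symmetrizing $\rho_s^{\otimes k}$ across $k+1$ registers leaves each register with marginal $\frac{k}{k+1}\rho_s+\frac{1}{k+1}\tau$ for some junk state $\tau$, so the total expected acceptance is at most $k(1-\varepsilon)+k\cdot 0\le k<k+\delta$; Definition~\ref{moneydef} sums acceptance probabilities precisely so that splitting and symmetrizing strategies can \emph{never} violate soundness (the paper makes the same point when motivating Definition~\ref{copydef}). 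So your dichotomy ``either the posterior concentrates enough to forge, or symmetrization already forges'' does not close the argument, and the hard case remains: the posterior may put substantial weight on keys $s'$ whose banknotes $\rho_{s'}$ are not accepted by $A(e_s,\cdot)$. Relatedly, ``the Bayes posterior on $s$ given the observed $\rho_s^{\otimes k}$'' is not yet well defined---a posterior requires committing to a measurement, and choosing that measurement (and controlling the damage it does to the copies you intend to resubmit) is part of what must be designed. What is actually needed, and what your proposal does not supply, is an argument that for some polynomial $k$ there is a measurement on $\rho_s^{\otimes k}$ whose outcome $s'$ satisfies $\Pr[A(e_s,\rho_{s'})\text{ accepts}]\ge 1-o(1)$ with noticeable probability over $s$---for instance by showing that any $s'$ receiving non-negligible posterior weight must have $\rho_{s'}$ close to $\rho_s$ in trace distance, or by an elimination or counting argument over the $2^n$ candidate keys. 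That step is the real content of the theorem.
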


As mentioned in Section \ref{PREV}, Wiesner's original scheme \cite{wiesner}
avoided the need for any computational assumption, but only by having the bank
maintain a giant lookup table, containing a classical description of every
banknote that has ever been issued. \ If we want to fit Wiesner's scheme into
Definition \ref{moneydef}, one way to do it is to assume that all parties have
access to a (classical) random oracle $\mathcal{O}$. \ For then the bank can
use a secret part of the oracle string to generate the banknotes $\left\vert
y\right\rangle \left\vert \psi_{y}\right\rangle $;\ and to any counterfeiter
who does not know which part of the oracle the bank is using, the states
$\left\vert \psi_{y}\right\rangle $\ will appear to be drawn uniformly from
$\left\{  \left\vert 0\right\rangle ,\left\vert 1\right\rangle ,\left\vert
+\right\rangle ,\left\vert -\right\rangle \right\}  ^{n}$. \ This observation
gives us the following:

\begin{theorem}
[implicit in \cite{wiesner}]\label{wiesnerthm}Relative to a random oracle
$\mathcal{O}$, there exists a private-key quantum money scheme with perfect
completeness and exponentially small soundness error.
\end{theorem}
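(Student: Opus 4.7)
The plan is to cast Wiesner's scheme \cite{wiesner} into Definition \ref{moneydef} by using the random oracle $\mathcal{O}$ to replace the bank's classical secret database. Take $s\in\{0,1\}^n$ as the secret key, $e_s:=s$ as the serial number, and view $\mathcal{O}:\{0,1\}^n\to\{0,1,2,3\}^n$ as specifying, for each $s$, an $n$-qubit BB84 product state. Concretely, $B(s)$ queries $\mathcal{O}(s)$ to obtain $z$ and outputs $(e_s,\rho_s)=(s,|\psi_z\rangle)$ with $|\psi_z\rangle=\bigotimes_{j=1}^n|\psi_{z_j}\rangle$ under the identification $\{|\psi_0\rangle,|\psi_1\rangle,|\psi_2\rangle,|\psi_3\rangle\}=\{|0\rangle,|1\rangle,|+\rangle,|-\rangle\}$. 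The authenticator $A$ on input $(e,\rho)$ queries $\mathcal{O}(e)$, measures each qubit of $\rho$ in the basis specified by $z_j$, and accepts iff every outcome matches. Completeness is perfect by construction.

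For soundness, fix a counterfeiter $C$ of size $\operatorname{poly}(n)$ making $T=\operatorname{poly}(n)$ classical oracle queries, starting from $\rho_s^{\otimes k}$ and outputting $k+r$ registers. I would compare the real experiment to an \emph{ideal} experiment $H$ in which the banknote is replaced by $|\psi_{z'}\rangle$ for a fresh uniform $z'\in\{0,1,2,3\}^n$ that is independent of $\mathcal{O}$ and $s$ (so $\mathcal{O}(s)$ is no longer constrained to equal $z'$). The key observation is that, until $C$ queries $\mathcal{O}$ at the point $s$, its entire view---banknote plus oracle responses at the queried points $Q\not\ni s$---is identically distributed in both experiments, since the two differ only at the unqueried value $\mathcal{O}(s)$, which is uniform in either case. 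In $H$, moreover, $s$ is information-theoretically independent of $C$'s view, so each query hits $s$ with probability exactly $1/2^n$; a union bound gives $\Pr_H[\text{query at }s]\leq T/2^n$, and by the coupling just described the same bound holds in the real experiment.

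Conditioned on no query at $s$, $C$'s problem coincides exactly with Wiesner's original one: starting from $k$ copies of a uniformly random $n$-qubit BB84 product state, produce $k+r$ registers maximizing the summed acceptance probability under the authenticator. I would then invoke a standard analysis of Wiesner's scheme---or, for the tightest treatment of the sum formulation, the semidefinite-programming characterization of Molina--Vidick--Watrous applied registerwise---to obtain $\sum_{i=1}^{k+r}p_i\leq k+r\,\epsilon_n$ for an exponentially small $\epsilon_n=2^{-\Omega(n)}$. Combining with the hitting bound, the soundness error satisfies
\[
\delta \;\leq\; (k+r)\cdot T/2^n \;+\; r\,\epsilon_n,
\]
which is exponentially small for polynomial $k,r,T$.

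The main obstacle I anticipate is not the random-oracle reduction, which is entirely standard, but obtaining the Wiesner-type bound in the ``sum of $p_i$'' form demanded by Definition \ref{moneydef}: the classical Wiesner analysis bounds ``probability all $k+1$ pass,'' which is strictly weaker than what we need. I would handle this by first symmetrizing the counterfeiter's $k+r$ output registers---the acceptance tests on them are identical and exchange-symmetric, so averaging over permutations does not decrease $\sum_i p_i$---and then appealing to (or slightly extending) the Molina--Vidick--Watrous SDP bound for the single averaged register given $k$ copies, which supplies the exponentially small per-extra-register contribution.
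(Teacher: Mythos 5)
Your proposal follows essentially the same route as the paper, whose ``proof'' of Theorem \ref{wiesnerthm} is only the one-paragraph observation that the bank can use a secret part of the random oracle string to generate Wiesner banknotes, so that to a counterfeiter ignorant of the secret the states appear to be uniformly random BB84 product states and Wiesner's information-theoretic analysis applies. You are in fact more careful than the paper on the two points it glosses over---the oracle-hitting argument and converting the ``all $k+1$ pass'' bound into the $\sum_i p_i\leq k+\delta$ form required by Definition \ref{moneydef}---with the one caveat that a quantum counterfeiter may query the classical oracle $\mathcal{O}$ in superposition, so your per-query union bound should be replaced by a BBBV-style hybrid argument over query magnitudes at $s$, which still yields an exponentially small $O(T/2^{n/2})$ deviation.
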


On the other hand, Wiesner's scheme is not query-secure, for the same reason
the BBBW scheme is not. \ In the full version of this paper, we give a
private-key quantum money scheme that \textit{is} query-secure, relative to a
random oracle $\mathcal{O}$.

In Section \ref{MONEYSCHEME}, we will present a candidate for a public-key
quantum money scheme based on random stabilizer states, while in Section
\ref{MONEYORACLE}, we will prove that public-key quantum money schemes exist
relative to a quantum oracle.

The situation is summarized in Table \ref{MONEYTAB}.

\begin{table*}[ptb]%
\begin{tabular}
[c]{l|lllll}%
\textbf{Money Scheme} & \textbf{Type} & \textbf{Oracle} & \textbf{Security} &
\textbf{States Used} & \textbf{Reference}\\\hline
Wiesner & Private-key & Random & Unconditional & Single qubits &
\cite{wiesner}\\
BBBW & Private-key & None & Assuming PRFs & Single qubits & \cite{bbbw}\\
Modified Wiesner & Query-secure & Random & Unconditional & Haar-random & Full
version\\
Modified BBBW & Query-secure & None & Assuming PRFs & Haar-random & Full
version\\
Quantum Oracle & Public-key & Quantum & Unconditional & Haar-random & This
paper\\
Random Stabilizers & Public-key & None & Conjectured & Stabilizer & This
paper
\end{tabular}
\label{MONEYTAB}\caption{Known quantum money schemes and their properties}%
\end{table*}

\subsection{Quantum Copy-Protection\label{CPPRELIM}}

What if we want to distribute unclonable quantum states that are
\textit{useful} for something besides just getting authenticated? \ This
brings us to the question of \textit{quantum software copy-protection}.
\ Informally, given a secret Boolean function $f:\left\{  0,1\right\}
^{n}\rightarrow\left\{  0,1\right\}  $ drawn from a known family $\mathcal{F}%
$, what we want is a quantum state $\rho_{f}$\ that

\begin{enumerate}
\item[(1)] can be efficiently prepared given a classical description of $f$,

\item[(2)] can be used to compute $f\left(  x\right)  $\ efficiently for any
input $x\in\left\{  0,1\right\}  ^{n}$, and

\item[(3)] \textit{cannot} be efficiently used to prepare more states from
which $f$\ can be computed in quantum polynomial time.
\end{enumerate}

It is clear that, if the function family $\mathcal{F}$\ is efficiently
\textit{learnable}---in the sense that we can output a circuit for an unknown
$f\in\mathcal{F}$ in quantum polynomial time, using only oracle access to
$f$---then there is no hope of copy-protecting $f$. \ For in that case, being
able to run a program for $f$ is tantamount to being able to copy the program.
\ Indeed, even if we cannot learn a useful \textit{classical} description of
$f$ by measuring $\rho_{f}$, it might still be possible to prepare additional
quantum programs for $f$ directly, by some quantum operation on $\rho_{f}$.

The quantum copy-protection problem might remind readers of the classical
\textit{code obfuscation} problem, and indeed there are similarities.
\ Roughly speaking, we say a program $P$ for a function $f\in\mathcal{F}$\ is
obfuscated if knowing $P$'s source code is \textquotedblleft no more
useful\textquotedblright\ than being able to run $P$, in the sense that any
property of $f$ that is efficiently computable given $P$'s source code, is
also efficiently computable given oracle access to $f$. \ Barak et
al.\ \cite{baraketal} famously showed that there exist function families
$\mathcal{F}$\ that are impossible to obfuscate. \ On the other hand, Wee
\cite{wee} and others\ have shown that, under strong cryptographic
assumptions, it \textit{is} possible to obfuscate point functions and several
related families of functions. \ In Section \ref{POINTFN}, we will give
proposals for quantumly copy-protecting point functions that are somewhat
reminiscent of known methods for obfuscating point functions.

However, let us point out two differences between copy-protection and
obfuscation. \ Firstly, it is trivial to show that copy-protection is
\textit{always} impossible in the classical world, for \textit{any} function
family: one does not need anything like the elegant argument of Barak et
al.\ \cite{baraketal}. \ Secondly, as discussed before, any function family
$\mathcal{F}$\ that is learnable from input/output behavior cannot be
copy-protected---but for exactly the same reason, $\mathcal{F}$\ \textit{can}
be obfuscated! \ For if we can output a program for $f\in\mathcal{F}$\ using
only oracle access to $f$, then clearly the source code of that program is no
more useful than the oracle access. \ Thus, while unbreakable copy-protection
has connections with obfuscation, fundamentally it is a new cryptographic
task, one whose very possibility depends on quantum mechanics.

We now define quantum copy-protection schemes.

\begin{definition}
\label{copydef}Consider a family $\mathcal{F}$\ of Boolean functions
$f:\left\{  0,1\right\}  ^{n}\rightarrow\left\{  0,1\right\}  $, where each
$f\in\mathcal{F}$\ is associated with a unique \textquotedblleft
description\textquotedblright\ $d_{f}\in\left\{  0,1\right\}  ^{m}$. \ (Thus
$\left\vert \mathcal{F}\right\vert \leq2^{m}$.) \ A quantum copy-protection
scheme for $\mathcal{F}$ consists of the following:

\begin{itemize}
\item A quantum circuit $V$\ of size $O\left(  \operatorname*{poly}\left(
n,m\right)  \right)  $ (the \textquotedblleft vendor\textquotedblright), which
takes $d_{f}$\ as input and produces a mixed state $\rho_{f}$\ as output.

\item A quantum circuit $C$\ of size $O\left(  \operatorname*{poly}\left(
n,m\right)  \right)  $ (the \textquotedblleft customer\textquotedblright),
which takes $\left(  \rho_{f},x\right)  $\ as input and attempts to output
$f\left(  x\right)  $.
\end{itemize}

We say $\left(  V,C\right)  $\ has correctness parameter $\varepsilon$\ if $C$
outputs $f\left(  x\right)  $\ with probability at least $1-\varepsilon$ given
$\left(  \rho_{f},x\right)  $\ as input, for all $f\in\mathcal{F}$\ and
$x\in\left\{  0,1\right\}  ^{n}$.

We say $\left(  V,C\right)  $\ has security $\delta$\ against a probability
distribution $\mathcal{D}$\ over $\mathcal{F}\times\left\{  0,1\right\}  ^{n}%
$,\ if for all quantum circuits $P$ and $L$\ of size $O\left(
\operatorname*{poly}\left(  n,m\right)  \right)  $ (the \textquotedblleft
pirate\textquotedblright\ and \textquotedblleft freeloader\textquotedblright%
\ respectively) and all $k,r=O\left(  \operatorname*{poly}\left(  n,m\right)
\right)  $, the following holds. \ Assume $P$ takes\ $\rho_{f}^{\otimes k}%
$\ as input, and outputs a state $\sigma_{f}$\ on $k+r$\ registers. \ For
$i\in\left[  k+r\right]  $, let $\sigma_{f}^{i}$\ denote the contents of the
$i^{\text{th}}$\ register. \ Also, suppose $L$ takes $\left(  \rho
_{f},x\right)  $\ as input and attempts to output $f\left(  x\right)  $.
\ Then if we run $L$\ on $(\sigma_{f}^{i},x)$ for all $i\in\left[  k+r\right]
$, the expected number of invocations that output $f\left(  x\right)  $,
averaged over $\left(  f,x\right)  $\ drawn from $\mathcal{D}$, is at most
$k+\left(  1-\delta\right)  r$.\footnote{One might also want to require a
concentration inequality---e.g. that for all inputs $x$,\ the probability that
at least $k+2r/3$\ of the pirated programs output $f\left(  x\right)
$\ correctly decreases exponentially with $r$. \ This is a topic we leave to
future work.}
\end{definition}

A few remarks on Definition \ref{copydef}. \ First, the security criterion
might seem a bit strange. \ The basic motivation is that we need to ignore
\textquotedblleft trivial\textquotedblright\ pirating strategies, such as
mapping the state\ $\rho_{f}^{\otimes2}$\ to%
\[
\frac{1}{3}\left(  \rho_{f}\otimes\rho_{f}\otimes I+\rho_{f}\otimes
I\otimes\rho_{f}+I\otimes\rho_{f}\otimes\rho_{f}\right)  ,
\]
which has large fidelity with $\rho_{f}$\ on each of the three registers. \ On
the other hand, we also do not want to require all $k+r$\ pirated programs to
output the right answer \textit{simultaneously} (with high probability and on
some input $x$), since that criterion is too stringent even for legitimate
programs with constant error. \ Looking at the expected number of correct
answers is convenient, since by linearity of expectation, we can then ignore
entanglement and classical correlations among the registers. \ Note that it is
always possible for $\left(  1-\varepsilon\right)  k+r/2$ of the
$k+r$\ pirated programs to get the right answers on average---using a pirating
strategy that outputs the legitimate programs $\rho_{f}^{\otimes k}$,
alongside $r$ programs that guess randomly on every input $x$. \ But ideally
it should not be possible to do too much better than that.

Second, a natural question is whether the state $\rho_{f}$ can be used more
than once, or whether the irreversibility of measurement makes such a state
\textquotedblleft disposable.\textquotedblright\ \ In our setting, disposable
states might actually be \textit{preferred}---since any disposable state is
copy-protected by definition! \ (If we could copy $\rho_{f}$\ with high
fidelity, then we could run each copy on a different input $x$, contrary to
assumption.) \ However, it is not hard to see that, provided the customer buys
$k=\Omega\left(  n\right)  $ copies of $\rho_{f}$\ from the quantum software
store, she can evaluate $f$ on as many inputs as she likes---indeed, all
$2^{n}$\ of them, if she has exponential time. \ For by standard
amplification, $\rho_{f}^{\otimes k}$\ can be used to evaluate $f$ with error
probability $2^{-\Omega\left(  k\right)  }$. \ So by Lemma \ref{goodasnew}, it
is possible to reuse $\rho_{f}^{\otimes k}$\ an exponential number of times,
by uncomputing garbage after each measurement.

In this paper, we will typically assume that $\rho_{f}$ \textquotedblleft
comes from the store\textquotedblright\ already amplified, and that both
customers and would-be software pirates can therefore reuse $\rho_{f}$ as many
times as needed. \ This raises an interesting point: given an amplified state
$\rho_{f}=\sigma^{\otimes k}$, a customer willing to tolerate slightly higher
error could always split $\rho_{f}$ into $\sigma^{\otimes k/2}\otimes
\sigma^{\otimes k/2}$,\ and give one of the copies of $\sigma^{\otimes k/2}$
to a friend (rather like donating a kidney). \ We leave as an open question
whether it is possible to amplify success probability in a way that does not
allow this sort of sharing.

Third, call the function family $\mathcal{F}$ and distribution $\mathcal{D}%
$\ \textit{quantumly learnable with error }$\delta$ if there exist
polynomial-size quantum circuits $Q$ and $C$ such that%
\[
\Pr_{\left(  f,x\right)  \in\mathcal{D}}\left[  C\left(  Q^{f},x\right)
\text{ outputs }f\left(  x\right)  \right]  \geq1-\delta,
\]
where $Q^{f}$ denotes the mixed state output by $Q$ given oracle access to
$f$. \ (Note that $Q$ does not receive $x$.) \ The following simple
proposition delimits the function families\ that one can hope to copy-protect.

\begin{proposition}
\label{learncp}No $\left(  \mathcal{F},\mathcal{D}\right)  $\ pair that is
quantumly learnable with error $\delta$\ can be quantumly copy-protected with
security $\delta+2^{-n}$.
\end{proposition}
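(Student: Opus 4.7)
The plan is to exploit the learner $(Q, C_{\text{learn}})$ to manufacture polynomially many effective quantum programs for $f$ from a single copy of $\rho_f$, while passing the remaining legitimate copies through intact. First I would observe that the customer circuit $C$, applied to $\rho_f$ and an input $x$, computes $f(x)$ with high accuracy, and by Lemma~\ref{goodasnew} the copy of $\rho_f$ so used is only negligibly disturbed. The pirate $P$ therefore gets effective black-box access to $f$ from any single $\rho_f$ it holds, and so can run the learner $Q$ with this simulated oracle, $r$ independent times, producing $r$ mixed states each distributed as $Q^f$.

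Second, $P$ outputs $k+r$ registers, each carrying a one-bit classical flag: the first $k$ are the (essentially intact) originals $\rho_f$ tagged by $|0\rangle$, and the last $r$ are copies of $Q^f$ tagged by $|1\rangle$. The freeloader $L$, on input $(\sigma,x)$, measures the flag and runs $C$ if it is $0$ or $C_{\text{learn}}$ if it is $1$. By correctness of $C$ together with the learning hypothesis, the expected number of registers on which $L$ outputs $f(x)$ correctly---averaged over $(f,x)\sim\mathcal{D}$---is at least $k(1-\varepsilon') + r(1-\delta)$, where $\varepsilon'$ aggregates the native correctness error of $C$ with the total Lemma~\ref{goodasnew} drift from the simulated queries. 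The security-$(\delta+2^{-n})$ bound permits only $k+(1-\delta-2^{-n})r$ correct outputs, so the attack succeeds as long as the excess $2^{-n}r - k\varepsilon'$ is positive, which any polynomial $r$ achieves once $\varepsilon' \ll 2^{-n}/k$.

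The main obstacle I expect is bounding $\varepsilon'$: since the pirate reuses a single $\rho_f$ across the polynomially many queries that $Q$ issues over its $r$ invocations, iterating Lemma~\ref{goodasnew} only gives trace-distance drift of order $T\sqrt{\varepsilon}$ after $T$ queries. Forcing this to be $o(2^{-n})$ requires first amplifying $C$ to exponentially small correctness error, paid for by reserving a polynomial bundle of the input $\rho_f$'s as a dedicated "simulator"---which is available because, as explained in Section~\ref{CPPRELIM}, $\rho_f$ is assumed to ship pre-amplified. Once $\varepsilon'$ is under control, the remaining pieces---the polynomial circuit sizes of $P$ and $L$ (inherited from $V$, $C$, $Q$, $C_{\text{learn}}$), the choice of $r$ as any polynomial, and the flag-bit packaging---are routine.
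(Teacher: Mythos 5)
Your proposal takes essentially the same route as the paper's proof: use the amplified program together with the evaluation circuit and Lemma~\ref{goodasnew} to simulate oracle access to $f$ with exponentially small damage, run the learner $Q$ to mint $r$ extra programs, and output them alongside the (negligibly disturbed) originals; your flag-bit freeloader and the explicit $2^{-n}r$ versus $k\varepsilon'$ accounting merely spell out what the paper leaves implicit. The one small slip is the claim that the $r$ new registers are distributed exactly as $Q^{f}$---they are only exponentially close in trace distance, since $Q$ queried a simulated rather than genuine oracle---but that loss is absorbed by the same amplification you already invoke, so the argument stands.
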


\begin{proof}
Using an amplified state of the form $\rho_{f}^{\otimes\operatorname*{poly}%
\left(  n\right)  }$, a pirate can \textit{simulate} quantum oracle access to
$f$ with exponentially small error. \ The pirate can thereby use the learning
algorithm $Q^{f}$ to output as many states $\sigma_{f}$\ as he wants with the
property that $\Pr_{\left(  f,x\right)  \in\mathcal{D}}\left[  C\left(
\sigma_{f},x\right)  \text{ outputs }f\left(  x\right)  \right]  \geq
1-\delta-2^{-n}$. \ Note that by Lemma \ref{goodasnew}, each \textquotedblleft
query\textquotedblright\ to $f$ damages $\rho_{f}^{\otimes\operatorname*{poly}%
\left(  n\right)  }$\ by only an exponentially small amount.
\end{proof}

Notice that if $\left\vert \mathcal{F}\right\vert \leq\operatorname*{poly}%
\left(  n\right)  $, then $\mathcal{F}$\ is quantumly learnable (and indeed
classically learnable), since the learning algorithm $Q$\ simply needs to
hardwire inputs $x_{1},\ldots,x_{\left\vert \mathcal{F}\right\vert -1}$\ such
that every distinct $f,f^{\prime}\in\mathcal{F}$\ differ on some $x_{i}$.
\ Thus, one corollary of Proposition \ref{learncp}\ is that we can only hope
to copy-protect superpolynomially large function families.

Let us end with a simple but important fact, which shows that, as in the
quantum money case, we can only hope for security under computational assumptions.

\begin{proposition}
\label{superpirate}A software pirate with unlimited computational power can
break any quantum copy-protection scheme.
\end{proposition}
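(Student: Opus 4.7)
The plan is to show that, given $\rho_f^{\otimes k}$ for a suitable polynomial $k = \operatorname*{poly}(n,m)$, an unbounded pirate can recover a classical description of $f$ and thereafter replicate $\rho_f$ at will via the public vendor circuit $V$, producing pirated programs that each evaluate $f$ essentially as well as the legitimate ones and so violating any non-trivial security bound in Definition \ref{copydef}. Throughout, one may assume the correctness parameter $\varepsilon$ is bounded below $1/2$, else the scheme is useless to customers.

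The first step is to bootstrap $\rho_f^{\otimes k}$ into a reusable oracle for $f$. Running $C$ coherently on each of the $k$ copies and taking a coherent majority of the $k$ answer registers yields, on any input $x$, the value $f(x)$ with error at most $\exp(-\Omega(k))$, by a Chernoff bound on $k$ independent trials each of success probability at least $1-\varepsilon$. After measuring the majority register and then uncomputing the garbage, Lemma \ref{goodasnew} guarantees that the residual state is within trace distance $\exp(-\Omega(k))$ of $\rho_f^{\otimes k}$. By the triangle inequality, the pirate can therefore simulate as many as $2^n$ successive queries to $f$ while leaving $\rho_f^{\otimes k}$ intact up to trace distance $2^n \exp(-\Omega(k))$, which is $o(1)$ once $k$ is chosen a sufficiently large polynomial in $n$.

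In the second step, the unbounded pirate queries $f$ on every $x \in \{0,1\}^n$, thereby recording the complete truth table of $f$. An exhaustive search over $\{0,1\}^m$ then locates any description $d_{f'}$ with $f' = f$ (at least one such description exists, since $f \in \mathcal{F}$). The pirate now invokes $V(d_{f'})$ as many times as needed to output $k+r$ fresh registers, each approximately equal to $\rho_f$; running the freeloader $L := C$ on each register together with $x$ outputs $f(x)$ with probability at least $1-\varepsilon$. Averaging over $(f,x) \sim \mathcal{D}$, the expected number of correct outputs is at least $(1-\varepsilon)(k+r)$, which exceeds the permitted bound $k + (1-\delta)r$ whenever $\delta > \varepsilon(1 + k/r)$; taking $r$ polynomially larger than $k$, and using the paper's convention that $\rho_f$ comes pre-amplified so $\varepsilon$ is exponentially small, this refutes any non-negligible $\delta$.

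The one delicate ingredient is keeping the accumulated Almost-As-Good-As-New damage from the $2^n$ reuses of $\rho_f^{\otimes k}$ negligible: one must choose $k$ polynomially larger than $n$ so that the per-query trace-distance loss $\exp(-\Omega(k))$ overwhelms the $2^n$-factor blowup from the triangle inequality, and one must amplify \emph{before} invoking Lemma \ref{goodasnew}, not after, so that the error being square-rooted is already exponentially small. Every other step---tabulating $f$, searching $\mathcal{F}$, replicating $\rho_f$ through the public circuit $V$---is a triviality for an adversary with unlimited computational power, which is why no quantum ingredient beyond the standard amplification-plus-AAGAN combination is required.
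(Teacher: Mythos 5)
Your proof is correct, but it identifies $f$ by a genuinely different mechanism than the paper does. The paper's proof observes that correctness forces $F(\rho_f,\rho_g)$ to be bounded away from $1$ for any two functions differing on some input, hence $F(\rho_f^{\otimes k},\rho_g^{\otimes k})\leq\epsilon^{k}$, so for $k$ a bit larger than $m$ the ensemble $\{\rho_f^{\otimes k}\}_{f\in\mathcal{F}}$ is nearly orthonormal and an Ettinger--H{\o}yer--Knill-style (possibly exponentially hard) measurement outputs $f$ with high probability; replication via $V$ is then left implicit. You instead identify $f$ operationally: you turn $\rho_f^{\otimes k}$ into a reusable oracle via amplification plus Lemma \ref{goodasnew}, read off the entire truth table with $2^n$ queries, and brute-force the $m$-bit description space. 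Your route is more self-contained (it reuses only the Chernoff-plus-Almost-As-Good-As-New combination that the paper already deploys in Proposition \ref{learncp} and in its reusability discussion, rather than citing the near-orthogonality measurement), and you are more explicit than the paper about the final accounting against Definition \ref{copydef}, including the correct inequality $\delta>\varepsilon(1+k/r)$ and the need to amplify before measuring. The trade-off is that your argument needs the mild extra observations that $\varepsilon$ is bounded below $1/2$ and that an unbounded (nonuniform) pirate can invert the description map $d\mapsto f_d$ for the fixed, known family $\mathcal{F}$ --- both harmless --- whereas the paper's fidelity argument needs only that distinct functions disagree somewhere. Both proofs are information-theoretically equivalent in strength; yours requires $k=\Omega(n)$ copies to control the accumulated damage, the paper's requires $k>2m$ to force near-orthogonality.
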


\begin{proof}
Let $f$ and $g$ be two functions in $\mathcal{F}$, and assume there exists an
$x\in\left\{  0,1\right\}  ^{n}$\ such that $f\left(  x\right)  \neq g\left(
x\right)  $. \ Then letting $\rho_{f}$\ and $\rho_{g}$\ be the quantum
programs for $f$ and $g$ respectively, the fidelity $F\left(  \rho_{f}%
,\rho_{g}\right)  $\ must be at most $\epsilon$, for some $\epsilon$\ bounded
away from $1$ by a constant. \ (Otherwise $\rho_{f}$ and $\rho_{g}$\ would
lead to the same answers on $x$ with $1-o\left(  1\right)  $\ probability.)
\ This implies that $F(\rho_{f}^{\otimes k},\rho_{g}^{\otimes k})\leq
\epsilon^{k}$. \ So if we choose $k$ sufficiently large (say, more than $2m$),
then the set of states $\left\{  \rho_{f}^{\otimes k}\right\}  _{f\in
\mathcal{F}}$ is extremely close to an orthonormal basis. \ Thus, as in the
algorithm of Ettinger, H\o yer, and Knill \cite{ehk} for the nonabelian Hidden
Subgroup Problem, there must be a measurement of $\rho_{f}^{\otimes k}%
$\ (possibly exponentially hard to implement) that outputs $f$ with high probability.
\end{proof}

\section{Quantum Money\label{MONEYINT}}

We now consider the problem of developing public-key quantum money schemes.
\ First, in Section \ref{MONEYSCHEME}, we propose an explicit candidate scheme
for public-key quantum money, based on random stabilizer states. \ Then, in
Section \ref{MONEYORACLE}, we use the Complexity-Theoretic No-Cloning Theorem
to construct a quantum oracle relative to which public-key quantum money
schemes exist.

\subsection{The Random Stabilizer Scheme\label{MONEYSCHEME}}

Recall that a \textit{stabilizer state} is a pure state that can be obtained
by starting from $\left\vert 0\right\rangle ^{\otimes n}$\ and then applying
controlled-NOT, Hadamard, and $\pi/4$-phase gates, while a \textit{stabilizer
measurement} is a measurement that can be performed using those gates together
with computational basis measurements. \ (See Aaronson and Gottesman
\cite{ag}\ for details.) \ Given a security parameter $n$, let $\mathcal{D}%
_{n}$\ be the uniform distribution over all $n$-qubit stabilizer states.
\ Also, let $m,\ell,\varepsilon$ be additional parameters such that
$n/\varepsilon\ll m\ll1/\varepsilon^{2}\ll\ell$.

To generate a banknote, first the bank prepares $\ell$ stabilizer states
$\left\vert C_{1}\right\rangle ,\ldots,\left\vert C_{\ell}\right\rangle $,
which are drawn independently from $\mathcal{D}_{n}$. \ (It is well-known that
any stabilizer state can be prepared in polynomial time.) \ The bank
temporarily remembers the classical descriptions of the $\left\vert
C_{i}\right\rangle $'s, though it can erase those descriptions once the
preparation procedure is finished. \ Next, for each $i\in\left[  \ell\right]
$, the bank generates $m$\ random stabilizer measurements $E_{i1}%
,\ldots,E_{im}$\ as follows. \ For each $j\in\left[  m\right]  $:

\begin{itemize}
\item With $1-\varepsilon$\ probability, $E_{ij}$\ is a tensor product of $n$
uniformly random Pauli operators, with a random phase. \ That is,
$E_{ij}=\left(  -1\right)  ^{b}P_{1}\otimes\cdots\otimes P_{n}$, where $b$ is
drawn uniformly from $\left\{  0,1\right\}  $, and each $P_{k}$\ is drawn
uniformly from $\left\{  I,\sigma_{x},\sigma_{y},\sigma_{z}\right\}  $.

\item With $\varepsilon$\ probability, $E_{ij}$\ is a random tensor product of
Pauli operators as above, except that we condition on the event that
$\left\vert C_{i}\right\rangle $\ is a $+1$\ eigenstate of $E_{ij}$\ (that is,
$E_{ij}\left\vert C_{i}\right\rangle =\left\vert C_{i}\right\rangle $).
\end{itemize}

We can represent these $\ell m$\ measurements by a table $\mathcal{E}=\left(
E_{ij}\right)  _{ij}$, using $\left(  2n+1\right)  \ell m$ classical bits.
\ Finally, the bank generates an ordinary, classical digital signature
$\operatorname*{sig}\left(  \mathcal{E}\right)  $\ of\ the table $\mathcal{E}%
$,\ to prove that it and it alone could have generated $\mathcal{E}$. \ The
bank then distributes $\left(  \left\vert C_{1}\right\rangle ,\ldots
,\left\vert C_{\ell}\right\rangle ,\mathcal{E},\operatorname*{sig}\left(
\mathcal{E}\right)  \right)  $\ as the quantum banknote.

To authenticate such a banknote, one does the following. \ First check that
$\operatorname*{sig}\left(  \mathcal{E}\right)  $\ is a valid digital
signature for $\mathcal{E}$. \ Next, for each $i\in\left[  \ell\right]  $,
choose an index $j\left(  i\right)  \in\left[  m\right]  $\ uniformly at
random. \ Let $M$ be the two-outcome measurement that applies $E_{1j\left(
1\right)  }$\ to $\left\vert C_{1}\right\rangle $, $E_{2j\left(  2\right)  }%
$\ to $\left\vert C_{2}\right\rangle $, and so on up to $\left\vert C_{\ell
}\right\rangle $, and that accepts if and only if the \textit{majority} of
these measurements return a $+1$\ outcome (corresponding to $\left\vert
C_{i}\right\rangle $\ being a $+1$\ eigenstate of $E_{ij\left(  i\right)  }$).
\ Then apply $M$ to $\left\vert C_{1}\right\rangle \otimes\cdots
\otimes\left\vert C_{\ell}\right\rangle $, accept if and only if $M$ accepts,
and finally apply uncompute to get rid of garbage.

By construction, each $\left\vert C_{i}\right\rangle $\ will be measured to be
in a $+1$\ eigenstate of $E_{ij\left(  i\right)  }$\ with independent
probability $\frac{1-\varepsilon}{2}+\varepsilon=1/2+\varepsilon/2$. \ So by a
Chernoff bound, the probability that $M$ rejects is bounded away from $1$.
\ Indeed, we can make the probability that $M$ rejects exponentially small, by
simply taking $\ell$\ to be sufficiently larger than $1/\varepsilon^{2}$. \ By
Lemma \ref{goodasnew}, this implies that when we uncompute, we recover a state
that is exponentially close to $\left\vert C_{1}\right\rangle \otimes
\cdots\otimes\left\vert C_{\ell}\right\rangle $\ in trace distance---which in
turn implies that we can reuse the quantum banknote an exponential number of times.

On the other hand, we conjecture the following:

\begin{conjecture}
\label{moneyconj}Given $\left(  \left\vert C_{1}\right\rangle ,\ldots
,\left\vert C_{\ell}\right\rangle ,\mathcal{E},s\right)  $, it is
computationally infeasible not only to recover classical descriptions of the
states $\left\vert C_{1}\right\rangle ,\ldots,\left\vert C_{\ell}\right\rangle
$, but even to prepare additional copies of these states---or for that matter,
of any states that are accepted by the authentication procedure with
non-negligible probability.
\end{conjecture}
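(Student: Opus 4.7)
The plan is to try to reduce Conjecture~\ref{moneyconj} to a plausibly-hard classical decoding problem -- a symplectic, signed variant of Learning Parity with Noise over $\mathbb{F}_2^{2n}$ -- and then to argue that the quantum copies in the counterfeiter's hand add no polynomial advantage beyond that classical problem.

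First I would recast the scheme classically. Each $n$-qubit stabilizer state $|C_i\rangle$ is determined by a Lagrangian subspace $S_i\subset\mathbb{F}_2^{2n}$ together with a sign character $\chi_i\colon S_i\to\{\pm 1\}$. Under the bit identification of signed Paulis with $\{\pm 1\}\times\mathbb{F}_2^{2n}$, the measurement list $\mathcal{E}_i=(E_{i1},\ldots,E_{im})$ becomes a noisy code: each entry is drawn either as a uniform signed Pauli (with probability $1-\varepsilon$) or as a uniformly random $+1$-stabilizer of $|C_i\rangle$ (with probability $\varepsilon$). Given only $\mathcal{E}_i$, the task of recovering $(S_i,\chi_i)$ is thus a noisy-Lagrangian-decoding problem, analogous to LPN with noise rate $1-\varepsilon$ and secret drawn from Lagrangians in symplectic $\mathbb{F}_2^{2n}$.

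Next I would show that any forger producing authenticating states must implicitly solve this decoding problem. A Chernoff argument over the $\ell$ verifier queries implies that a state accepted with non-negligible probability must be a $+1$ eigenstate of an $\Omega(\varepsilon)$-noticeable fraction of the conditioned $E_{ij}$'s. A standard Goldreich--Levin-style correlation-to-recovery argument should then let one extract $(S_i,\chi_i)$ in polynomial time from such a forger together with the classical list $\mathcal{E}_i$, since a $1/2+\Omega(\varepsilon)$ bias on uniformly random Paulis is exactly the type of advantage LPN-style list decoding exploits.

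The key remaining step -- and the main obstacle -- is to handle the $k=\mathrm{poly}(n)$ copies of $|C_i\rangle$ available to the forger. Here I would try to prove a simulation lemma: any efficient quantum algorithm acting on $|C_i\rangle^{\otimes k}$ together with $\mathcal{E}_i$ can be simulated, up to negligible error, by a classical algorithm with only black-box access to the $+1$-stabilizer-membership oracle for $(S_i,\chi_i)$. This is plausible because measuring any Pauli on $|C_i\rangle$ either returns a membership-oracle answer (when it commutes with $S_i$) or a uniformly random bit (when it does not), and because Gottesman--Knill efficiently simulates Clifford operations on stabilizer states. Combined with the previous steps, such a lemma would reduce counterfeiting to the hardness of symplectic LPN and close the argument. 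The reason the statement is only a conjecture is that no existing simulation result rules out a genuinely quantum attack -- for example, a coherent joint measurement on $|C_i\rangle^{\otimes k}$ combined with a non-Clifford unitary constructed from the full list $\mathcal{E}_i$ -- extracting strictly more information than a classical membership oracle could; a full proof would need either a new simulation theorem handling such non-Clifford attacks on stabilizer states, or a reduction that re-randomizes the secret so as to obliterate the copies, both of which appear well beyond current techniques.
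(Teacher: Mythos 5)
The statement you were asked to prove is, in the paper, explicitly a \emph{conjecture}: the paper offers no proof of it, only a paragraph of intuition (recovering the $\left\vert C_{i}\right\rangle $'s from $\mathcal{E}$ resembles a random instance of noisy decoding for linear codes, which is $\mathsf{NP}$-complete in the worst case) together with sanity checks delimiting the parameter ranges in which the conjecture could possibly hold ($m\gg n/\varepsilon$ to defeat Gaussian elimination, $\varepsilon\ll1/\sqrt{m}$ to defeat the commutation-statistics attack). Your proposal is best read as a proof \emph{strategy} that formalizes and extends that intuition --- the symplectic/signed-LPN recasting is a sharper version of the paper's ``noisy decoding'' analogy --- and you are right, and commendably explicit, that it does not close the conjecture. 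The gap you name (no simulation lemma ruling out genuinely quantum attacks that exploit $\left\vert C_{i}\right\rangle ^{\otimes k}$ directly, without ever learning a classical description) is exactly the obstacle the paper itself acknowledges when it says that copying without learning is ``conceivable'' but that no efficient method is known.

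Two further weak links in the portion of your argument you treat as routine. First, the ``correlation-to-recovery'' step is not standard here: a forged register that passes the test is a quantum state, not a classical assignment, so you cannot evaluate it against all $m$ measurements to harvest correlations, and the secret is a Lagrangian subspace plus a character rather than an $\mathbb{F}_{2}$-linear functional, so Goldreich--Levin list decoding does not apply off the shelf. (Beware also that the verifier as literally described accepts by bare majority, so the maximally mixed state already passes with probability about $1/2$; any quantitative version of your Chernoff step must assume a threshold strictly above $1/2+\Omega(\varepsilon)$.) Second, even a complete reduction of forging to symplectic LPN would yield only \emph{conditional} security: the paper's citation of worst-case $\mathsf{NP}$-completeness gives no average-case hardness, so the target of your reduction is itself an unproven assumption. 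None of this is a criticism of your judgment --- you have correctly located why the statement remains open --- but the proposal should not be mistaken for progress beyond the paper's own informal discussion.
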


The intuition behind Conjecture \ref{moneyconj}\ is this: recovering classical
descriptions of $\left\vert C_{1}\right\rangle ,\ldots,\left\vert C_{\ell
}\right\rangle $\ given $\mathcal{E}$\ can be seen as a random instance of the
noisy decoding problem for linear codes, which is known to be $\mathsf{NP}%
$-complete in the worst case (see Berlekamp et al.\ \cite{berlekamp}).
\ Furthermore, while it is conceivable that a counterfeiter could use her
knowledge of $\mathcal{E}$\ to\ copy the $\left\vert C_{i}\right\rangle $'s
without learning classical descriptions of them, we have not found an
efficient way to do this. \ Indeed, it seems possible that to a
polynomial-time quantum algorithm---even one with knowledge of $\mathcal{E}%
$---the $\left\vert C_{i}\right\rangle $'s are actually indistinguishable from
$n$-qubit maximally mixed states.

Note that the scheme is \textit{not} secure if $m\leq n/\varepsilon$---since
then finding an $n$-qubit stabilizer state $\left\vert C_{i}\right\rangle
$\ that is accepted by an $\varepsilon$\ fraction of the measurements
$E_{i1},\ldots,E_{im}$\ is a trivial problem, solvable by Gaussian
elimination. \ Likewise, the scheme is not secure if $\varepsilon$ is too
large (say, greater than$\ 1/\sqrt{m}$)---since then one can recover the
stabilizer group of $\left\vert C_{i}\right\rangle $, with high probability,
by listing all measurements in the set $\left\{  E_{i1},\ldots,E_{im}\right\}
$\ that commute with suspiciously more than half\ of the other measurements in
the set.\footnote{We thank Peter Shor for this observation.} \ Thus,
Conjecture \ref{moneyconj}\ can only hold for suitable parameter ranges.

\subsection{Oracle Result\label{MONEYORACLE}}

If we allow ourselves the liberty of a quantum oracle, then we can prove the following.

\begin{theorem}
\label{moneythm}There exists a quantum oracle $U$ relative to which a
public-key quantum money scheme exists. \ (Here all parties---the bank,
authenticators, and counterfeiters---have the same access to $U$; no party has
\textquotedblleft inside information\textquotedblright\ about $U$ that is not
available to others.)
\end{theorem}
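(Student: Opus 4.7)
The plan is to define the quantum oracle $U$ by Haar-randomly sampling, for each secret key $s\in\{0,1\}^n$, an independent $n$-qubit ``banknote'' state $|\psi_s\rangle$, together with a public serial number $e_s$ chosen via a hidden random injection $s\mapsto e_s$ from $n$-bit keys to (say) $2n$-bit labels. The oracle $U$ then packages two subroutines, both accessible to every party: a \emph{minting} unitary $U_M$ acting as $|s\rangle|0\rangle|0\rangle\mapsto|s\rangle|e_s\rangle|\psi_s\rangle$, and a \emph{verification} unitary $U_V$ that, on classical label input $|e\rangle$, applies the reflection $I-2|\psi_s\rangle\langle\psi_s|$ to a quantum register whenever $e=e_s$ and acts as the identity otherwise. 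Fairness is automatic: the bank's only ``advantage'' is that it chose a fresh secret key $s$; authenticators and counterfeiters see the same $U$. Completeness follows from a standard phase-kickback authentication protocol using $U_V$, combined with Lemma~\ref{goodasnew} to ensure the banknote survives verification with negligible trace-distance damage, so it is reusable.

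The soundness argument fixes a secret key $s$ and reduces the counterfeiter's task to the hypothesis of Theorem~\ref{ctnc0} by two hybrid steps. First, $U_V$ queries on labels $e\ne e_s$ perform reflections about Haar-random states $|\psi_{s'}\rangle$ that are information-theoretically independent of $|\psi_s\rangle$, and so can be simulated locally by the counterfeiter without affecting its output distribution. Second, $U_M$ queries are effectively useless because the counterfeiter does not know $s$: a query-by-query hybrid replaces each $U_M$ invocation with a ``dummy'' oracle whose output distribution is independent of $|\psi_s\rangle$, at a total cost of at most $\operatorname*{poly}(n)\cdot 2^{-n/2}$ in variation distance. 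After these reductions, a counterfeiter producing $\ell=k+r$ registers with expected authentication probability exceeding $k+\delta$ is exactly the cloning adversary of Theorem~\ref{ctnc0}, which forces $\Omega(\sqrt{2^n}/\operatorname*{poly}(n))$ queries --- super-polynomial in $n$, ruling out any polynomial-time counterfeiter.

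The main obstacle is making the second hybrid rigorous. A counterfeiter could in principle prepare a superposition of $U_M$ queries that places non-trivial amplitude on the specific key $s$, extracting information about $|\psi_s\rangle$ through the minting channel rather than through $U_V$. To control this, one argues that because $s$ is uniformly random in a set of size $2^n$ and hidden from the counterfeiter by the random injection $s\mapsto e_s$, any single $U_M$ query has amplitude at most $O(2^{-n/2})$ on $|s\rangle$ until enough $U_M$ queries have been made to search for $s$ in the Grover sense; a BBBV-style hybrid then accumulates only polynomially small error across the $\operatorname*{poly}(n)$ total queries. Once this step is formalized, the remainder of the proof is a direct invocation of Theorem~\ref{ctnc0} with the soundness parameters guaranteed by Definition~\ref{moneydef}, yielding the claimed quantum oracle relative to which a public-key quantum money scheme exists.
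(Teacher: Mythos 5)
Your proposal follows essentially the same route as the paper: an oracle that mints Haar-random banknote states $\left\vert \psi_{s}\right\rangle $ tagged with random public serial numbers and publicly verifies them, with security resting on the BBBV/Grover lower bound for finding $s$ and on Theorem~\ref{ctnc0} for the cloning task itself. The paper leaves the reduction to Theorem~\ref{ctnc0} as ``more-or-less immediate,'' so your hybrid steps (removing $U_V$ queries on wrong labels, which reflect about independent states, and removing $U_M$ queries via a BBBV hybrid) are a reasonable fleshing-out of the same argument rather than a different approach.
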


By \textquotedblleft quantum oracle,\textquotedblright\ we simply mean a
unitary transformation $U$ that can be applied in a black-box fashion. \ (We
may assume controlled-$U$ and $U^{-1}$\ are also available; this does not
particularly affect our results.) \ Quantum oracles were first studied in a
complexity-theoretic context by Aaronson and Kuperberg \cite{ak}, where they
were used to exhibit an oracle separation between the classes $\mathsf{QMA}$
and $\mathsf{QCMA}$.

In the proof of Theorem \ref{moneythm}, the oracle $U$\ does basically what
one would expect. \ Firstly, for each possible \textquotedblleft secret
key\textquotedblright\ $s\in\left\{  0,1\right\}  ^{n}$ that could be chosen
by the bank, the oracle maps the state $\left\vert 0\right\rangle \left\vert
s\right\rangle $ to $\left\vert 0\right\rangle \left\vert s\right\rangle
\left\vert e_{s}\right\rangle \left\vert \psi_{s}\right\rangle $,\ where
$e_{s}$\ is a classical \textquotedblleft public key\textquotedblright\ chosen
uniformly at random from $\left\{  0,1\right\}  ^{3n}$, and $\left\vert
\psi_{s}\right\rangle $\ is an $n$-qubit pure state chosen uniformly at random
under the Haar measure. \ (Of course, after being chosen at random, $e_{s}%
$\ and $\left\vert \psi_{s}\right\rangle $\ are then fixed for all time by the
oracle. \ Notice that with overwhelming probability, there is no pair
$s,s^{\prime}$\ such that $e_{s}=e_{s^{\prime}}$. \ Also, here and throughout
we omit ancilla qubits set to $\left\vert 0\cdots0\right\rangle $, when they
are part of the input to $U$.)

Secondly, for each $s\in\left\{  0,1\right\}  ^{n}$, the oracle maps the state
$\left\vert 1\right\rangle \left\vert e_{s}\right\rangle \left\vert \psi
_{s}\right\rangle $ to $\left\vert 1\right\rangle \left\vert e_{s}%
\right\rangle \left\vert \psi_{s}\right\rangle \left\vert 1\right\rangle $.
\ On the other hand, it maps $\left\vert 1\right\rangle \left\vert
e_{s}\right\rangle \left\vert \phi\right\rangle $\ to $\left\vert
1\right\rangle \left\vert e_{s}\right\rangle \left\vert \phi\right\rangle
\left\vert 0\right\rangle $\ if $\left\vert \phi\right\rangle $\ is orthogonal
to $\left\vert \psi_{s}\right\rangle $, and $\left\vert 1\right\rangle
\left\vert e\right\rangle \left\vert \phi\right\rangle $\ to $\left\vert
1\right\rangle \left\vert e\right\rangle \left\vert \phi\right\rangle
\left\vert 0\right\rangle $\ if $e\neq e_{s}$\ for every $s$.

By feeding $U$ inputs of the form $\left\vert 0\right\rangle \left\vert
s\right\rangle $, the bank can prepare and distribute an unlimited number of
banknotes $\left\vert e_{s}\right\rangle \left\vert \psi_{s}\right\rangle
$.\ \ By feeding $U$ inputs of the form $\left\vert 1\right\rangle \left\vert
e_{s}\right\rangle \left\vert \psi_{s}\right\rangle $, buyers and sellers can
then authenticate these banknotes. \ Furthermore, by the optimality of
Grover's algorithm \cite{bbbv}, it is clear that any would-be counterfeiter
needs $\Omega\left(  2^{n/2}\right)  $\ queries to $U$ to find the secret key
$s$, even if given the public key $e_{s}$.

So the real question is this: given $e_{s}$ \textit{together with} $\left\vert
\psi_{s}\right\rangle ^{\otimes k}$ for some $k=\operatorname*{poly}\left(
n\right)  $, can a counterfeiter, by making $\operatorname*{poly}\left(
n\right)  $\ queries to $U$, prepare a state that has non-negligible
overlap\ with $\left\vert \psi_{s}\right\rangle ^{\otimes k+1}$? \ We observe
that a negative answer follows more-or-less immediately from Theorem
\ref{ctnc0}, the Complexity-Theoretic No-Cloning Theorem.

\section{Quantum Copy-Protection\label{CPINT}}

Having summarized our results about quantum money, we now move on to the
related problem of copy-protecting quantum software.

\subsection{Two Schemes for Copy-Protecting Point Functions\label{POINTFN}}

Recall that a point function $f_{s}:\left\{  0,1\right\}  ^{n}\rightarrow
\left\{  0,1\right\}  $ has the form%
\[
f_{s}\left(  x\right)  =\left\{
\begin{array}
[c]{cc}%
1 & \text{if }x=s\\
0 & \text{otherwise}%
\end{array}
\right.
\]
In this section we propose two explicit schemes for quantumly copy-protecting
the family $\left\{  f_{s}\right\}  _{s\in\left\{  0,1\right\}  ^{n}}$\ of
point functions.

The first scheme, which we are grateful to Adam Smith for suggesting, uses a
pseudorandom generator $g:\left\{  0,1\right\}  ^{n}\rightarrow\left\{
0,1\right\}  ^{p\left(  n\right)  }$, where $p$ is some reasonably large
polynomial (say $n^{3}$). \ Given the secret key $s$, the software vendor
first computes $g\left(  s\right)  $, then reinterprets $g\left(  s\right)  $
as a description of a quantum circuit $U_{g\left(  s\right)  }$\ over some
universal basis of gates, which acts on $m$ qubits for some $m\ll n$. \ The
vendor then outputs $\left\vert \psi_{s}\right\rangle =U_{g\left(  s\right)
}\left\vert 0\right\rangle ^{\otimes m}$\ as its quantum program for $f_{s}$.
\ Given $\left\vert \psi_{s}\right\rangle $, the customer can efficiently
compute $f_{s}\left(  x\right)  $ for any $x$, by measuring the state
$U_{g\left(  x\right)  }^{-1}\left\vert \psi_{s}\right\rangle $\ in the
standard basis and then checking whether the outcome is $\left\vert
0\right\rangle ^{\otimes m}$.

Harrow and Low \cite{harrowlow} have recently shown that random quantum
circuits are approximate unitary $2$-designs. \ From this it follows that if
$x\neq s$, then $\left\vert \left\langle \psi_{x}|\psi_{s}\right\rangle
\right\vert $\ must be exponentially small with overwhelming probability,
unless $g$ is insecure against $2^{m}$-time classical adversaries. \ It is
also clear that $s$\ cannot be learned by a polynomial-time measurement on
$\left\vert \psi_{s}\right\rangle ^{\otimes k}$\ for any
$k=\operatorname*{poly}\left(  n\right)  $, unless $g$ is insecure against
polynomial-time quantum adversaries. \ However, the key conjecture is the following:

\begin{conjecture}
Given $\left\vert \psi_{s}\right\rangle ^{\otimes k}$, no polynomial-time
quantum algorithm can prepare a $\left(  k+1\right)  ^{\text{st}}$\ copy of
$\left\vert \psi_{s}\right\rangle $, or indeed, any other state from which
$f_{s}$\ can be efficiently computed.
\end{conjecture}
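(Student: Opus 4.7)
Since this is stated as a conjecture rather than a theorem, my proposal is a sketch of the direction I would try, with the expected obstacle flagged.

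The natural approach has three stages. First, a PRG reduction: suppose for contradiction that a polynomial-time quantum pirate $P$, given $|\psi_s\rangle^{\otimes k}$ for uniformly random $s$, outputs a state $\sigma$ useful for computing $f_s$ with non-negligible probability. The event ``$\sigma$ is useful'' is efficiently checkable given $s$, by simulating the intended evaluation algorithm on polynomially many random inputs and comparing to $f_s$. Consequently, by the assumed quantum security of $g$, $P$ must still succeed with non-negligible probability when $g(s)$ is replaced by a uniformly random string $r\in\{0,1\}^{p(n)}$, so we may work in the idealized scheme where the vendor's circuit $U_r$ is chosen uniformly at random.

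Second, a design reduction: invoking Harrow--Low \cite{harrowlow}, for $p(n)$ chosen sufficiently larger than $k$ the distribution of $U_r$ is an approximate unitary $t$-design for some $t=\operatorname{poly}(n)$, so $|\psi_r\rangle^{\otimes t}$ is statistically indistinguishable from $t$ copies of a Haar-random $m$-qubit pure state. This lets us reason as if $|\psi_s\rangle$ were literally Haar-random.

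Third, a no-cloning step. For the narrow clause about producing a $(k{+}1)^{\text{st}}$ copy, Werner's optimal universal symmetric cloning fidelity of $(k+1)/(k+2^m)$ is exponentially small in $m$ and immediately contradicts $P$'s success. For the broader clause about producing ``any state from which $f_s$ can be efficiently computed,'' the natural move is to invoke the Complexity-Theoretic No-Cloning Theorem (Theorem \ref{ctnc0}), interpreting the pirate's output $\sigma$ together with its companion evaluator $A(\cdot,\cdot)$ as an effective recognition oracle for $|\psi_s\rangle$: one ``queries'' by running $A(\sigma,s)$ on a candidate state and inspecting acceptance, using Lemma \ref{goodasnew} to amortize the damage to $\sigma$ across $\operatorname{poly}(n)$ simulated queries.

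The main obstacle lies precisely at this last step. The evaluator $A(\sigma,\cdot)$ takes the label $s$ as its classical argument, but in the reduction the pirate does not know $s$, so one cannot directly synthesize the phase oracle $U_{\psi_s}$ that Theorem \ref{ctnc0} demands. Working around this appears to require either (i) arguing that in the Haar-random idealization any state functionally equivalent to $|\psi_s\rangle$ must in fact be close to $|\psi_s\rangle$ up to a local isometry, thereby collapsing the broad clause to pure cloning, or (ii) exhibiting a seed-agnostic way to turn $\sigma$ into a recognizer, for example by combining $\sigma$ with the residual genuine copies from $|\psi_s\rangle^{\otimes k}$ in a SWAP-test-like construction. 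Route (i) seems cleanest, but I do not see how to rule out elaborate encodings of $s$ inside $\sigma$ that bear no geometric resemblance to $|\psi_s\rangle$, which is presumably why the authors leave the statement as a conjecture.
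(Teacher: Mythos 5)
You have correctly identified that this statement is an open conjecture: the paper offers no proof of it, and explicitly says that basing the security of its explicit schemes on a standard assumption is ``the key challenge, which we leave unresolved.'' The only supporting evidence the paper gives is exactly the surrounding discussion you reproduce in your first two stages (PRG security implies $s$ is not learnable from $\left\vert \psi_{s}\right\rangle ^{\otimes k}$, and the Harrow--Low design property implies $\left\vert \left\langle \psi_{x}|\psi_{s}\right\rangle \right\vert $ is exponentially small for $x\neq s$). Your diagnosis of where a proof would get stuck --- that Theorem \ref{ctnc0} needs a reflection oracle about $\left\vert \psi_{s}\right\rangle $, which nothing in this scheme supplies, and that a pirate's output could encode $s$ in a way bearing no geometric resemblance to $\left\vert \psi_{s}\right\rangle $ --- is the right one, and matches the paper's own framing of why only the oracle version (Theorem \ref{learnthm}) is actually proved.

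Two of your intermediate steps would not survive scrutiny even as parts of a conditional argument, so be careful if you try to develop this further. First, your PRG reduction is circular as stated: the event ``$\sigma$ is useful for computing $f_{s}$'' can only be efficiently tested by evaluating at $x=s$ (a point-function program that outputs $0$ everywhere passes any test on random inputs), but a distinguisher for $g$ receives only the challenge string $y\in\{g(s),r\}$ and not the seed $s$, so it cannot perform that test; and handing it $s$ as advice lets it distinguish trivially by computing $g(s)$ itself. Second, Harrow--Low establish that random circuits are approximate unitary \emph{2}-designs, not $t$-designs for arbitrary polynomial $t$; the paper uses only the $2$-design property, and your step of treating $\left\vert \psi_{s}\right\rangle ^{\otimes t}$ as statistically Haar-random for $t=\operatorname*{poly}(n)$ does not follow from the cited result. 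Neither issue changes your (correct) bottom line that the statement remains unproven.
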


Our second candidate scheme is based on the Hidden Subgroup Problem over the
symmetric group. \ Given the secret key $s\in\left\{  0,1\right\}  ^{n}$, the
software vendor first encodes $s$, in some canonical way, as a permutation
$\tau_{s}\in S_{n}$\ such that $\tau_{s}^{2}=e$\ is the identity. \ The vendor
then prepares a state of the form%
\[
\left\vert \psi_{s}\right\rangle =\frac{\left\vert \sigma_{1}\right\rangle
+\left\vert \sigma_{1}\tau_{s}\right\rangle }{\sqrt{2}}\otimes\cdots
\otimes\frac{\left\vert \sigma_{k}\right\rangle +\left\vert \sigma_{k}\tau
_{s}\right\rangle }{\sqrt{2}},
\]
where $\sigma_{1},\ldots,\sigma_{k}$\ are permutations chosen uniformly at
random from $S_{n}$. \ Finally, the vendor distributes $\left\vert \psi
_{s}\right\rangle $ as the (amplified) quantum program for $f_{s}$. \ Given
$\left\vert \psi_{s}\right\rangle $, the customer can compute $f_{s}\left(
x\right)  $ for any $x$, by mapping each state $\frac{\left\vert \sigma
_{i}\right\rangle +\left\vert \sigma_{i}\tau_{s}\right\rangle }{\sqrt{2}}$\ to%
\[
\frac{1}{2}\left[  \left\vert 0\right\rangle \left(  \left\vert \sigma
_{i}\right\rangle +\left\vert \sigma_{i}\tau_{s}\right\rangle \right)
+\left\vert 1\right\rangle \left(  \left\vert \sigma_{i}\tau_{x}\right\rangle
+\left\vert \sigma_{i}\tau_{s}\tau_{x}\right\rangle \right)  \right]  ,
\]
then Hadamarding the first qubit and measuring it in the standard basis. \ If
$\tau_{x}=\tau_{s}$, then outcome $\left\vert 0\right\rangle $\ will be
obtained with certainty, while if $\tau_{x}\neq\tau_{s}$,\ then outcome
$\left\vert 1\right\rangle $\ will be obtained with\ probability $1/2$.

On the other hand, recovering $\tau_{s}$\ given $\left\vert \psi
_{s}\right\rangle $ is clearly at least as hard as the Hidden Subgroup Problem
(HSP) over the symmetric group, at least for subgroups $H\leq S_{n}$\ of order
$2$. \ Solving this special case of HSP would lead to a polynomial-time
quantum algorithm for the Rigid Graph Isomorphism problem. \ Furthermore,
Hallgren et al.\ \cite{hmrrs} have shown that any quantum algorithm for
recovering $\tau_{s}$\ would require entangled measurements on $\Omega\left(
n\log n\right)  $\ coset states; such an algorithm seems beyond present-day techniques.

Again, though, the conjecture we need is a stronger one:

\begin{conjecture}
Given $\left\vert \psi_{s}\right\rangle $, no polynomial-time quantum
algorithm can prepare an \textit{additional} coset state $\frac{\left\vert
\sigma_{k+1}\right\rangle +\left\vert \sigma_{k+1}\tau_{s}\right\rangle
}{\sqrt{2}}$, or indeed, any other state from which $f_{s}$\ can be
efficiently computed.
\end{conjecture}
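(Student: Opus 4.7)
The plan is to try to reduce the pirating task to a well-studied hardness assumption—in particular, to the Hidden Subgroup Problem (HSP) over $S_n$ for subgroups of order $2$. I would proceed in three steps.

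First, I would establish a structural reduction: any state $\sigma$ from which $f_s$ can be computed efficiently with non-negligible advantage over guessing can be converted, via the decoding procedure described in the scheme (Hadamard on the first qubit after applying $\tau_x$ conditionally, then measurement), into a state that is close in trace distance to a fresh coset state $\frac{|\sigma_{k+1}\rangle + |\sigma_{k+1}\tau_s\rangle}{\sqrt{2}}$. The underlying intuition is that the evaluation circuit is an interferometer that collapses precisely onto the $\tau_s$-coset subspace, so any ``useful'' output must have most of its weight there. This step lets us focus on the bare pirating task: producing an additional coset state from $k$ given copies.

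Second, I would try to build a reduction from HSP-over-$S_n$ (for order-$2$ subgroups) to pirating. Given a black-box hiding function $\phi\colon S_n \to X$ whose level sets are the cosets of $H = \langle \tau \rangle$, the standard prescription (uncomputing $\phi$ from the uniform superposition) produces polynomially many coset states. If a pirate could, given $k$ such states, efficiently manufacture any desired number of additional coset states, then by iterating we could obtain $\Omega(n \log n)$ coset states; combined with a hypothetical efficient entangled measurement this would break HSP over $S_n$, hence solve Rigid Graph Isomorphism in polynomial quantum time. The matching lower bound of Hallgren et al.~\cite{hmrrs}—that $\Omega(n \log n)$ coset states are information-theoretically necessary—gives the right quantitative target to aim for.

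Third, the main obstacle, and the reason the statement remains a conjecture rather than a theorem, is that pirating is a priori \emph{strictly weaker} than HSP: the pirate never has to produce a classical description of $\tau_s$, only a quantum state that encodes the same hidden coset structure. So a direct reduction to HSP hardness does not close the gap. To push past this I would attempt a ``no-cloning for coset states'' lemma in the spirit of Theorem~\ref{ctnc0}: averaged over the random $\sigma_i$'s, the reduced state on one register is maximally mixed on the two-dimensional coset subspace, and a Holevo-type argument should bound the extra coset information that can be extracted from the joint state $|\psi_s\rangle$. The hard part will be handling arbitrary entangled unitary processing on the $k$ registers, and in particular ruling out amplification tricks specific to the non-Abelian structure of $S_n$ (for instance, those that exploit random self-reducibility of HSP instances). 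Closing this last step rigorously, without an oracle, appears to require new techniques that go beyond both the adversary method and the complexity-theoretic no-cloning theorem developed in this paper.
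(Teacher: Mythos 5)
The statement you were given is presented in the paper as an open \emph{conjecture}, and the paper offers no proof of it: immediately after stating it, the author notes that ``the copying problem clearly reduces to HSP, but we do not know of a reduction in the other direction.'' Your third step correctly identifies exactly this obstacle---a pirate never has to recover a classical description of $\tau_s$, so conjectured hardness of the order-$2$ Hidden Subgroup Problem over $S_n$ does not by itself imply security of the scheme---and your honest conclusion that new techniques are needed matches the paper's own position. To that extent your assessment of where the difficulty lies is accurate, and your first step (arguing that any state from which $f_s$ is efficiently computable must have large overlap with the coset subspace) is a reasonable, if unproven, structural reduction.

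Your second step, however, does not function as a reduction and should be dropped or reframed. In the HSP setting one has oracle access to the hiding function $\phi$, and the standard preparation (uniform superposition, compute $\phi$, discard) already yields as many fresh coset states as desired without any pirate; so showing that a pirate lets one accumulate $\Omega(n\log n)$ coset states establishes nothing new. Moreover, even given arbitrarily many coset states, no \emph{efficient} measurement solving HSP over $S_n$ is known---the Hallgren et al.\ and Ettinger--Hoyer--Knill results are information-theoretic, not algorithmic---so the final link in your chain (``combined with a hypothetical efficient entangled measurement this would break HSP'') is conditional on an object not known to exist, and in any case breaking HSP would be a consequence rather than a contradiction. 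A security proof requires the opposite direction: that an efficient pirate yields an efficient algorithm for a problem believed hard. What remains is your proposed ``no-cloning for coset states'' lemma, which is indeed the missing ingredient; but proving such a statement without an oracle is precisely the open problem the paper leaves unresolved, so no proof is achieved here, nor is one expected.
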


The copying problem clearly reduces to HSP, but we do not know of a reduction
in the other direction.

\subsection{Oracle Result\label{CPORACLE}}

Our main result about quantum copy-protection is the following:

\begin{theorem}
\label{learnthm}There exists a quantum oracle $U$, relative to which any
family $\mathcal{F}$\ of efficiently computable functions that is not
quantumly learnable can be quantumly copy-protected (with security $\delta$,
against pirates mapping $k$ programs to $k+r$\ with $\left(  1-2\delta\right)
r>k$).
\end{theorem}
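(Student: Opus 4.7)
My plan is to construct the quantum oracle $U$ by adapting the money construction of Theorem \ref{moneythm}. The oracle consists of two components: (i) a \emph{preparation} subroutine that, given a description $d_f$ of any $f \in \mathcal{F}$, maps $|d_f\rangle|0\rangle$ to $|d_f\rangle|\psi_f\rangle$, where $|\psi_f\rangle$ is an $n$-qubit pure state drawn independently under the Haar measure; and (ii) an \emph{evaluation} subroutine that maps $|\psi_f\rangle|x\rangle|y\rangle$ to $|\psi_f\rangle|x\rangle|y \oplus f(x)\rangle$, acting trivially on the first-register subspace orthogonal to $|\psi_f\rangle$. Correctness is immediate: the vendor invokes preparation, and the customer invokes evaluation.

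For security, I would argue by contrapositive: given a hypothetical pirate $P$ that violates copy-protection, I would exhibit a polynomial-time quantum algorithm that learns an arbitrary $f \in \mathcal{F}$ from black-box access to $f$, contradicting the assumption that $\mathcal{F}$ is not quantumly learnable. The simulator $S$ uses only oracle access to $f$ and no knowledge of $|\psi_f\rangle$. By Theorem \ref{tdesthm}, $S$ samples an index $x$ and substitutes $|\varphi_x\rangle^{\otimes k}$ for $|\psi_f\rangle^{\otimes k}$ as the pirate's input; it answers preparation queries with fresh copies of $|\varphi_x\rangle$ (which it can produce since it knows $x$) and evaluation queries by combining reflection about $|\varphi_x\rangle$ with its own black-box calls to $f$. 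Theorem \ref{tdesthm} guarantees that, provided the pirate's total copy-plus-query count stays below roughly $\sqrt{2^n}$, $P$'s view inside $S$ is statistically close to its view on the true oracle $U$. The learned object is then one of the $k+r$ pirated registers $\sigma_i$ together with a circuit $C$ that runs the freeloader $L$ on $(\sigma_i, x)$ while similarly simulating $L$'s evaluation queries; under the hypothesis $(1-2\delta)r > k$, a standard averaging argument shows this pair succeeds on a random input strictly more often than trivial random-guessing, yielding the desired learning algorithm.

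The main obstacle, in my view, is ensuring that the simulation of the evaluation oracle remains faithful throughout. Unlike Theorem \ref{moneythm}, where the adversary interacts only with the reflection oracle of Theorem \ref{ctnc0}, here both the pirate \emph{and} the freeloader make evaluation queries on arbitrary input states, and $S$ must answer them consistently without any classical description of $|\psi_f\rangle$. The condition $(1-2\delta)r > k$---effectively demanding that the pirate more than double the number of programs---arises from the averaging argument used to extract a learner: a random pirated register beats the random-guessing baseline only when the ``genuine'' new programs outnumber the trivial passes through the $k$ originals by a sufficient margin, which is the source of the factor of $2$. Strengthening the result to handle pirates that produce merely one extra program would likely require a more refined analysis---for instance, a concentration inequality on the correctness of individual pirated programs rather than their average---together with a sharper $t$-design construction tailored to adversaries who query the evaluation oracle adaptively.
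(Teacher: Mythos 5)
Your overall architecture matches the paper's: the oracle has a preparation component and an evaluation component built around Haar-random states, and security is proved by contrapositive, using the $t$-design of Theorem \ref{tdesthm} to let a simulator with only black-box access to $f$ substitute $\left\vert \varphi\right\rangle ^{\otimes k}$ for $\left\vert \psi_{f}\right\rangle ^{\otimes k}$ and answer the pirate's oracle queries. However, there is a genuine gap at the final and most delicate step: you propose that the learned object be a pirated register $\sigma_{i}$ together with a circuit that runs the freeloader $L$ while ``similarly simulating $L$'s evaluation queries''---i.e., still making black-box calls to $f$. That is not a learner in the required sense: the definition of quantum learnability demands a state from which $f$ can be computed \emph{without further help from the oracle}, so the output program must be oracle-free. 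The paper's proof instead runs $L$ with the simulated oracle $\widetilde{U}$ replaced by the \emph{identity}, and the heart of the argument---which your proposal omits entirely---is showing that at most $k$ of the $k+r$ freeloader invocations can be materially affected by this replacement. The mechanism is: $\widetilde{U}$ differs from the identity only on states with non-negligible overlap on $\left\vert 1\right\rangle \left\vert K\right\rangle \left\vert \varphi\right\rangle \sum_{y}\alpha_{y}\left\vert y\right\rangle $, so by the BBBV hybrid argument any affected invocation must at some point feed $\widetilde{U}$ such a state; using a clock register and amplitude amplification one could then distill a high-fidelity copy of $\left\vert \varphi\right\rangle $ from that register; and by the Complexity-Theoretic No-Cloning Theorem (transferred to the $t$-design case via Theorem \ref{tdesthm}) this can happen for at most $k$ registers. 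The condition $\left(  1-2\delta\right)  r>k$ then comes from the arithmetic $k+\left(  1-\delta\right)  r-k\geq k+\delta r$, not merely from an averaging heuristic about ``genuine new programs outnumbering trivial passes.''

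Two smaller points. First, the paper deliberately makes $\left\vert \psi_{f}\right\rangle $ a $2n$-qubit state (not $n$-qubit as you propose): the distillation argument guesses the input $x$ and pays a $O(\sqrt{2^{n}}\operatorname*{poly}(n))$ amplitude-amplification overhead, which must still fall below the cloning lower bound, and the $2n$-qubit choice supplies the needed gap. Second, Theorem \ref{tdesthm} is invoked not just to make the pirate's view close to the real one, but specifically to transfer the no-cloning conclusion from the Haar-random oracle world to the simulated world---if the pirate's output could be converted into $k+1$ copies of $\left\vert \varphi\right\rangle $ in the simulation, that conversion would itself be a distinguisher. You should make both of these roles explicit.
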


By a function family $\mathcal{F}$\ being \textquotedblleft quantumly
learnable,\textquotedblright\ we mean that given quantum oracle access to any
function $f\in\mathcal{F}$, one can in polynomial time prepare a state
$\left\vert \varphi_{f}\right\rangle $ from which $f$\ can then be computed in
polynomial time without further help from the oracle. \ As discussed before,
it is clear that no learnable family of functions can be copy-protected.
\ Theorem \ref{learnthm} says that this is the \textit{only} relativizing
obstruction to quantum copy-protection.

In the remainder of this section, we explain the essential steps in the proof
of Theorem \ref{learnthm}, in the special case where we only need to protect
against pirating algorithms that \textit{more than double} the number of
programs.

The oracle $U$ does the following. \ Given as input a state of the form
$\left\vert 0\right\rangle \left\vert d_{f}\right\rangle $, where $d_{f}$ is a
classical description of a Boolean function $f\in\mathcal{F}$, the oracle
outputs $\left\vert 0\right\rangle \left\vert d_{f}\right\rangle \left\vert
K_{f}\right\rangle \left\vert \psi_{f}\right\rangle $, where $K_{f}$\ is a
random classical codeword specifying $f$, and $\left\vert \psi_{f}%
\right\rangle $\ is a $2n$-qubit \textquotedblleft code
state\textquotedblright\ chosen uniformly at random under the Haar measure for
each $f$. \ Given as input a state of the form $\left\vert 1\right\rangle
\left\vert K_{f}\right\rangle \left\vert \psi_{f}\right\rangle \left\vert
x\right\rangle $, for some $x\in\left\{  0,1\right\}  ^{n}$, the oracle
outputs $\left\vert 1\right\rangle \left\vert K_{f}\right\rangle \left\vert
\psi_{f}\right\rangle \left\vert x\right\rangle \left\vert f\left(  x\right)
\right\rangle $. \ Given as input a state of the form $\left\vert
1\right\rangle \left\vert K_{f}\right\rangle \left\vert \phi\right\rangle
\left\vert x\right\rangle $, for any $\left\vert \phi\right\rangle
$\ orthogonal to $\left\vert \psi_{f}\right\rangle $, the oracle outputs
$\left\vert 1\right\rangle \left\vert K_{f}\right\rangle \left\vert
\phi\right\rangle \left\vert x\right\rangle \left\vert 0\right\rangle $.

It is clear that, for any function $f\in\mathcal{F}$, the software vendor can
create and distribute states of the form $\left\vert K_{f}\right\rangle
\left\vert \psi_{f}\right\rangle $, from which $f\left(  x\right)  $\ can be
efficiently computed for any input $x$. \ Furthermore, by the optimality of
Grover's algorithm, a software pirate has little hope of using the oracle $U$
to find $d_{f}$, given only $K_{f}$. \ As in the quantum money case, the real
question is this: given the state $\left\vert K_{f}\right\rangle \left\vert
\psi_{f}\right\rangle ^{\otimes k}$ for some $k=\operatorname*{poly}\left(
n\right)  $, can a quantum pirate produce $\ell>k$\ programs for $f$\ using
only $\operatorname*{poly}\left(  n\right)  $\ queries to $U$?

The Complexity-Theoretic No-Cloning Theorem suggests that the answer should be
no. \ However, we now have to handle a new difficulty that did not arise in
the money case. \ The new difficulty is that for certain function families
$\mathcal{F}$---namely, the learnable families---we know that it \textit{is}
possible to pirate $\left\vert \psi_{f}\right\rangle $ efficiently, by using
$\left\vert \psi_{f}\right\rangle $\ to simulate an oracle for $f$, and then
learning a new quantum program for $f$ just from $f$'s input/output behavior.
\ Thus, our proof will need to show that learnability is the only obstacle to
copy-protection. \ Or taking the contrapositive, we need to construct a
\textit{simulator}, which takes as input a polynomial-time algorithm for
pirating $\left\vert \psi_{f}\right\rangle $, and converts it into a
polynomial-time algorithm that learns a quantum program for $f$ using only
oracle access to $f$ (and no oracle access to $U$).

How should the simulator work? \ For simplicity, let us restrict ourselves to
simulators that use the pirating algorithm as a black box in constructing the
learning algorithm. \ Intuitively, what the simulator ought to do is

\begin{enumerate}
\item[(1)] \textquotedblleft mock up\textquotedblright\ its own stand-in
$\left\vert K\right\rangle \left\vert \varphi\right\rangle ^{\otimes k}$\ for
the state $\left\vert K_{f}\right\rangle \left\vert \psi_{f}\right\rangle
^{\otimes k}$,

\item[(2)] run the pirating algorithm on $\left\vert K\right\rangle \left\vert
\varphi\right\rangle ^{\otimes k}$, using the simulator's own oracle access to
$f$ to simulate the pirating algorithm's oracle calls to $U$ on inputs of the
form $\left\vert 1\right\rangle \left\vert K_{f}\right\rangle \left\vert
\psi_{f}\right\rangle \left\vert x\right\rangle $, and then

\item[(3)] use the output of the pirating algorithm to get an oracle-free
quantum program for $f$.
\end{enumerate}

The idea behind step (3) is as follows:\ we know that at least \textit{some}
of the programs output by the pirating algorithm must not make essential use
of the oracle $U$. \ For the oracle can only be usefully accessed via the
\textquotedblleft pseudorandom\textquotedblright\ state $\left\vert
\varphi\right\rangle $---and by the Complexity-Theoretic No-Cloning Theorem,
the simulator cannot have produced any additional copies of $\left\vert
\varphi\right\rangle $.

However, already at step (1) of the above plan, we encounter a problem: in the
oracle world, the states $\left\vert \psi_{f}\right\rangle $ were chosen
uniformly at random under the Haar measure. \ In polynomial time, with no
oracle access, how does one \textquotedblleft mock up\textquotedblright\ a
$2n$-qubit state $\left\vert \varphi\right\rangle $\ such that $\left\vert
\varphi\right\rangle ^{\otimes k}$\ behaves indistinguishably from $k$\ copies
of a uniform random state? \ This is the question that we answer in the full
version using Theorem \ref{tdesthm}, which gives an explicit quantum
$t$-design for arbitrary $t=\operatorname*{poly}\left(  n\right)  $\ with the
properties we need.

Let us now explain how the pieces are put together. \ Assume that $\left(
1-2\delta\right)  r>k$. \ Suppose we are given a pirating algorithm that takes
$\left\vert \psi_{f}\right\rangle ^{\otimes k}$\ as input (for a given
$f\in\mathcal{F}$), makes $T$ queries to the quantum oracle $U$, and outputs
$k+r$\ possibly-entangled quantum programs $\sigma_{1}^{U},\ldots,\sigma
_{k+r}^{U}$ such that%
\[
\sum_{i=1}^{k+r}\Pr_{\left(  f,x\right)  \in\mathcal{D}}\left[  L^{U}\left(
\sigma_{i}^{U},x\right)  \text{ outputs }f\left(  x\right)  \right]  \geq
k+\left(  1-\delta\right)  r.
\]
From this pirating algorithm, we want to obtain a polynomial-time algorithm
that uses oracle access to $f$ to learn an (oracle-free) quantum program for
$f$. \ Here is how it works:

\begin{enumerate}
\item[(1)] The simulator chooses some $t=\operatorname*{poly}\left(
k,T,n\right)  $. \ It then chooses a $2n$-qubit state $\left\vert
\varphi\right\rangle $\ uniformly at random from a quantum $t$-design, in the
sense of Theorem \ref{tdesthm}. \ The simulator also chooses a random string
$K$.

\item[(2)] The simulator creates a \textit{simulated oracle}\ $\widetilde{U}$,
which maps $\left\vert 1\right\rangle \left\vert K\right\rangle \left\vert
\varphi\right\rangle \left\vert x\right\rangle $\ to $\left\vert
1\right\rangle \left\vert K\right\rangle \left\vert \varphi\right\rangle
\left\vert x\right\rangle \left\vert f\left(  x\right)  \right\rangle $\ and
$\left\vert 1\right\rangle \left\vert K\right\rangle \left\vert \phi
\right\rangle \left\vert x\right\rangle $\ to $\left\vert 1\right\rangle
\left\vert K\right\rangle \left\vert \phi\right\rangle \left\vert
x\right\rangle \left\vert 0\right\rangle $ for every $\left\vert
\phi\right\rangle $\ orthogonal to $\left\vert \varphi\right\rangle $. \ (As a
technicality, $\widetilde{U}$\ does nothing on inputs of the form $\left\vert
0\right\rangle \left\vert d_{f}\right\rangle $.\footnote{For simplicity, we
are assuming it is exponentially hard for anyone but the software vendor to
guess the classical description $d_{f}$ for even a \textit{single} function
$f\in\mathcal{F}$---in which case, no one but the software vendor ever has
anything to gain by querying $U$ on inputs of the form $\left\vert
0\right\rangle \left\vert d\right\rangle $. \ With slightly more work, one can
remove this assumption, and even assume $d_{f}$ has some standard form such as
a description of a circuit for $f$.}) \ Note that $\widetilde{U}$\ can be
implemented in polynomial time, using the simulator's oracle access to $f$.

\item[(3)] The simulator runs the pirating algorithm, except with $\left\vert
K\right\rangle \left\vert \varphi\right\rangle ^{\otimes k}$ in place of
$\left\vert K_{f}\right\rangle \left\vert \psi_{f}\right\rangle ^{\otimes k}$
for the input, and queries to $\widetilde{U}$\ in place of queries to $U$.
\ The simulator outputs $\left\vert \Phi\right\rangle $, the output of the
pirating algorithm, as its candidate for an oracle-free quantum program for
$f$.

\item[(4)] Let $\sigma_{1},\ldots,\sigma_{k+r}$ be the (possibly-entangled)
registers of $\left\vert \Phi\right\rangle $\ corresponding to the
$k+r$\ pirated programs. \ Then given an input $x\in\left\{  0,1\right\}
^{n}$ and freeloading algorithm $L$, one computes $f\left(  x\right)  $\ as
follows. \ Choose $i\in\left[  k+r\right]  $\ uniformly at random; then run
$L^{\widetilde{U}}\left(  \sigma_{i},x\right)  $\ with $\widetilde{U}%
$\ replaced by the identity transformation, and return $L$'s output as the
guess\ for $f\left(  x\right)  $.
\end{enumerate}

We claim that step (4) outputs $f\left(  x\right)  $\ with probability
non-negligibly greater than $1/2$. \ Notice that one can amplify the success
probability by repeating steps (1)-(3) $t^{\prime}=\operatorname*{poly}\left(
n\right)  $\ times to obtain the state $\left\vert \Phi\right\rangle ^{\otimes
t^{\prime}}$, then repeating step (4) on each copy of $\left\vert
\Phi\right\rangle $\ and outputting the majority answer.

The argument goes as follows. \ By Theorem \ref{ctnc0} (the
Complexity-Theoretic No-Cloning Theorem), it is impossible to use the original
pirating algorithm to produce $k+1$\ copies of the Haar-random state
$\left\vert \psi_{f}\right\rangle $. \ Indeed, there cannot even be a single
input $x\in\left\{  0,1\right\}  ^{n}$ such that given $x$, one can use the
output of the pirating algorithm (together with $\operatorname*{poly}\left(
n\right)  $\ additional queries to $U$) to prepare $k+1$\ copies of
$\left\vert \psi_{f}\right\rangle $. \ For then, by simply \textit{guessing}
$x$ and then using amplitude amplification, one could prepare $k+1$\ copies of
$\left\vert \psi_{f}\right\rangle $\ using only $O\left(  \sqrt{2^{n}%
}\operatorname*{poly}\left(  n\right)  \right)  $\ queries to $U$, whereas
Theorem \ref{ctnc0}\ implies that $\Omega\left(  2^{n}/\operatorname*{poly}%
\left(  n\right)  \right)  $\ queries are needed. \ (This is why we stipulated
that $\left\vert \psi_{f}\right\rangle $\ has $2n$ qubits rather than $n$.)

By Theorem \ref{tdesthm}, it follows that the output $\left\vert
\Phi\right\rangle $ cannot be used to prepare $k+1$\ copies of $\left\vert
\varphi\right\rangle $\ in the simulated case either---for otherwise, we would
be able to distinguish the real case from the simulated one.

As a consequence, when we run $L^{\widetilde{U}}\left(  \sigma_{i},x\right)
$\ for each $i\in\left[  k+r\right]  $, at least $r$\ of the $k+r$%
\ invocations must be unaffected when $\widetilde{U}$\ is replaced by the
identity transformation. \ For if an invocation \textit{is} affected, then by
the BBBV lower bound \cite{bbbv}, it must at some point have fed
$\widetilde{U}$\ an input state that has $\Omega\left(  1/\operatorname*{poly}%
\left(  n\right)  \right)  $\ fidelity with some state of the form $\left\vert
1\right\rangle \left\vert K\right\rangle \left\vert \varphi\right\rangle
\sum_{y}\alpha_{y}\left\vert y\right\rangle $. \ For those are the only states
on which $\widetilde{U}$\ behaves differently from the identity
transformation. \ Thus, we can prepare a \textquotedblleft clock
state\textquotedblright\ of the form $\frac{1}{\sqrt{T}}\sum_{t=1}%
^{T}\left\vert t\right\rangle $, and use that state to determine how many
steps $t$ of $L$\ to apply to $\sigma_{i}$. \ We can then apply
$\operatorname*{poly}\left(  n\right)  $\ steps of amplitude amplification to
the joint state of the clock register and the $\sigma_{i}$\ register,
searching for a marked item of the form $\left\vert t\right\rangle
\otimes\left\vert \varphi\right\rangle $\ for any $t$. \ This will produce, in
the $\sigma_{i}$\ register, a state having $1-\varepsilon$\ fidelity with
$\left\vert \varphi\right\rangle $. \ But we already decided that this can be
done for at most $k$\ registers.

In summary, the expression%
\[
\sum_{i=1}^{k+r}\Pr_{\left(  f,x\right)  \in\mathcal{D}}\left[  L^{\widetilde
{U}}\left(  \sigma_{i},x\right)  \text{ outputs }f\left(  x\right)  \right]
\]
can decrease by at most (say) $k+2^{-n}$\ when $\widetilde{U}$\ is replaced by
$I$. \ Since $\left(  1-\delta\right)  r>k+\delta r$, this means the sum is at
least%
\begin{align*}
k+\left(  1-\delta\right)  r-\left(  k+2^{-n}\right)   &  =\left(
1-\delta\right)  r-2^{-n}\\
&  \geq k+\delta r-2^{-n}.
\end{align*}
So for\ $i\in\left[  k+r\right]  $\ chosen randomly, $L^{I}\left(  \sigma
_{i},x\right)  $ outputs the correct value of $f\left(  x\right)  $\ with
probability bounded above $1/2$, as claimed.

\section{Open Problems\label{OPEN}}

Can we find more explicit candidate schemes for public-key quantum money---and
better yet, prove such a scheme secure under a standard assumption?

Can we find candidate schemes for quantumly copy-protecting richer families of
functions than just point functions? \ What about trapdoor inversion functions?

Can we prove a scheme\ for copy-protecting point functions (such as those in
Section \ref{POINTFN}) secure under a standard assumption?

Can we improve Theorem \ref{learnthm} to remove the restriction on $r$?

Can a public-key (or at least query-secure) quantum money scheme exist, that
does not require multi-qubit entanglement in the banknotes? \ What about a
scheme for copy-protecting point functions that does not require multi-qubit
entanglement in the programs?

Can we show that public-key quantum money schemes exist relative to a
\textit{classical} oracle, rather than a quantum oracle? \ What about
nontrivial copy-protection schemes?

Is there a way to amplify a quantum program\ \textquotedblleft
unsplittably\textquotedblright---i.e., such that one cannot efficiently
decompose the amplified program into two somewhat-less-amplified programs, as
$\rho^{\otimes k}$\ can be decomposed into $\rho^{\otimes k/2}\otimes
\rho^{\otimes k/2}$?

Can we improve the parameters of the Complexity-Theoretic No-Cloning Theorem?

Can the Goldreich-Goldwasser-Micali reduction \cite{ggm} from PRGs to PRFs be
adapted to work in the presence of quantum adversaries?

Can we find a function family which is quantumly obfuscatable, but is not (or
is not known to be) classically obfuscatable?

Can we give constructions for unclonable quantum ID cards or quantum proofs?
\ How do these functionalities relate to money and copy-protection?

What can we say about \textit{information-theoretically secure} quantum
copy-protection, in the regime where the number of copies of the quantum
program is assumed to be small?

\section{Acknowledgments}

I am grateful to Michele Mosca and Douglas Stebila for making the connection
between the Complexity-Theoretic No-Cloning Theorem and quantum money; Adam
Smith for pointing me to Wiesner's article and suggesting the copy-protection
scheme based on random quantum circuits; Andris Ambainis and Luca Trevisan for
discussions about $t$-designs; Zeev Dvir for Theorem \ref{quantumprf}; the
anonymous reviewers for their comments; and Anne Broadbent, Ran Canetti, Aram
Harrow, Avinatan Hassidim, Peter Shor, and Salil Vadhan for helpful conversations.

{\small
\bibliographystyle{plain}
\bibliography{thesis}
}

\end{document}